\newcommand{\keywords}[1]{\par\addvspace\baselineskip
\noindent\keywordname\enspace\ignorespaces#1}
\begin{document}

\mainmatter  

\title{Computing Theory Prime Implicates in Modal Logic}

\titlerunning{Computing Theory Prime Implicates in Modal Logic}

%
%
\author{Manoj K. Raut\thanks{Author thanks the NBHM, DAE, Mumbai, India for financial support under grant reference number 2/48(16)/2014/NBHM(R.P.)/R\&D II/1392.}}
%
\authorrunning{Manoj K. Raut}

\institute{Dhirubhai Ambani Institute of Information and Communication Technology,\\
Gandhinagar, Gujarat-382007, India\\
\mailsa\\
}

%
%

\toctitle{Lecture Notes in Computer Science}
\tocauthor{Authors' Instructions}
\maketitle

\begin{abstract}
The algorithm to compute theory prime implicates, a generalization of prime implicates, in propositional logic 
has been suggested in \cite{Marquis}. In this paper we have extended that
algorithm to compute theory prime implicates of a knowledge base $X$ with respect to another knowledge base $\Box Y$ using \cite{Bienvenu}, where $Y$ is a propositional knowledge base and $X\models Y$, in modal system $\mathcal{T}$ and we have also proved its correctness. We have also proved that it is an equivalence preserving knowledge compilation and the size of theory prime implicates of $X$ with respect to $\Box Y$ is less than the size of the prime implicates of $X\cup\Box Y$. We have also extended the query answering algorithm in modal logic.
\keywords{modal logic, theory prime implicates, knowledge compilation}
\end{abstract}

\section{Introduction}
Propositional entailment problem is a fundamental issue in artificial intelligence due to its high complexity. Determining whether a query logically follows from a given knowledge base is intractable \cite{Cook} in general as every known algorithm runs in time exponential in the size of the given knowledge base. To overcome such computational intractability, the propositional entailment problem is split into two phases such as off-line and on-line. In the off-line phase, the original knowledge base $X$ is compiled into a new knowledge base $X^{'}$ and in on-line phase queries are actually answered from the new knowledge base in time polynomial in their size. In such type of compilation most of the computational overhead shifted into the off-line phase, is amortized over large number of on-line query answering. The off-line computation is called {\em knowledge compilation}.

Several approaches of knowledge compilation in propositional logic, first order logic and modal logic has been suggested so far in literature \cite{Cadoli, Coudert1, Darwiche, Jackson1, Kean1, Kleer1, Kleer2, Ngair, Manoj, Manoj1, Reiter, Shiny, Slagle, Strzemecki, Tison}. The first kind of approach consists of an equivalence preserving knowledge compilation. In such an approach, the knowledge base $X$ is compiled into another equivalent knowledge base $\Pi(X)$, called the prime implicates of $X$ with respect to which queries are answered from $\Pi(X)$ in polynomial time. In another approach to equivalence preserving compilation in propositional logic, Marquis suggested the computation of theory prime implicates \cite{Marquis} of a knowledge base $X$ with respect to another knowledge base $Y$, so that queries can be answered from theory prime implicates in polynomial time. 

Most of the work in knowledge
compilation have been restricted to propositional logic and first order logic in
spite of an increasing intrest in modal logic. Due to lack of expressive power in propositional logic and the undecidability of first order logic, modal logic is required as a knowledge representation language in many problems. Modal logic gives a trade-off between expressivity and complexity as they are more expressive than propositional logic and computationally better behaved than first order logic. An
algorithm to compute the set of prime implicates of modal logic $\mathcal{K}$ and $\mathcal{K}_{n}$ have been proposed in \cite{Bienvenu} and \cite{Bienvenu1} respectively. 

In \cite{Marquis}, the notion of prime implicates is generalized to theory prime implicates in propositional logic where the size of theory prime implicate compilation of a knowledge base is always exponentially smaller than the size of its prime implicate compilation. Moreover, query answering from theory prime implicate compilation can be performed in time polynomial in their size. In this paper we extend this concept from propositional to modal logic using the algorithm in \cite{Bienvenu}. So here we compute the theory prime implicate of a knowledge base $X$ with respect to another {\it restricted} knowledge base $\Box Y$, i.e, $\Theta(X,\Box Y)$ where $Y$ is a propositional knowledge base such that $X\models Y$. It can be noted that if $Y=\emptyset$ then $\Theta(X,\Box Y)$ becomes $\Pi(X)$.

The paper is organized as follows. In section 2 We give basic results in modal logic. In Section 3 we propose basic definitions of prime implicates, theory prime implicates and we describe the properties of theory prime implicates, the algorithm for computing theory prime implicates and query answering in modal logic. Section 4 concludes the paper.   

\section{Preliminaries}
Let us now discuss the basics of modal logic $\mathcal{K}$ from \cite{Blackburn1, Blackburn2}. The alphabet of modal formulas is $Var\cup\{\neg,\vee, \Diamond,(,)\}$. $Var$ is a countable set of variables denoted by $p,q,r,\ldots$. The connectives $\neg$ and $\vee$ are negation and disjunction. $\Diamond$ is the modal operator `possible'. The modal formulas $MF$ are defined inductively as follows. Variables are modal formulas. If $A$ and $B$ are modal formulas then $\neg A, A\vee B, \Diamond A$ are modal formulas. For the sake of convenience, we introduce the connectives $A\rightarrow B\equiv\neg A\vee B, A\leftrightarrow B\equiv(A\rightarrow B)\wedge(B\rightarrow A)$. The `necessary' operator $\Box$ is defined as $\Box A\equiv\neg\Diamond\neg A$. We avoid using parentheses whenever possible.

\begin{definition}\label{model}
The semantics of modal logic $\mathcal{K}$ is defined using Kripke models \cite{Kripke}. A Kripke model $M$ is a triple $\langle W, R, v\rangle$ where $W$ is a nonempty set (of worlds), $R$ is a binary relation on $W$ called the accessibility relation , so if $(w, w^{'})\in R$ then we say $w^{'}$ is accessible from $w$, and $v: W\rightarrow 2^{Var}$ is a valuation function, which assigns to each world $w\in W$ a subset $v(w)$ of $Var$ such that $p$ is true at a world $w$ iff $p\in v(w)$.
\end{definition}

\begin{definition}\label{model_truth}
Given any Kripke model $M=\langle W, R, v\rangle$, a world $w\in W$, and a formula $\phi\in MF$, the truth of $\phi$ at $w$ of $M$ denoted by $M,w\models\phi$, is defined inductively as follows:
\begin{itemize}
 \item $M,w\models p$ where $p\in Var$ iff  $p\in v(w)$,
 \item $M,w\models\neg\phi$ iff $M,w\not\models\phi$, 
 \item $M,w\models\phi\wedge\psi$ iff $M,w\models\phi$ and $M,w\models\psi$,
 \item $M,w\models\phi\vee\psi$ iff $M,w\models\phi$ or $M,w\models\psi$,
 \item $M,w\models\Box\phi$ iff for all $w^{'}\in W$ with $wRw^{'}$ we have $M,w^{'}\models\phi$,
 \item $M,w\models\Diamond\phi$ iff for some $w^{'}\in W$ with $wRw^{'}$ we have $M,w^{'}\models\phi$.
  \end{itemize}
\end{definition}

We say that a formula $\phi$ is satisfiable if there exists a model $M$ and and a world $w$ such that $M,w\models\phi$ and say $\phi$ is valid denoted by $\models\phi$ if $M,w\models\phi$ for all $M$ and $w$. A formula $\phi$ is unsatisfiable written as $\phi\models \bot$ if there exists no $M$ and $w$ for which $M,w\models\phi$. A formula $\psi$ is a logical consequence of a formula $\phi$ written as $\phi\models\psi$ if $M,w\models\phi$ implies $M,w\models\psi$ for every model $M$ and world $w\in W$.

There are two types of logical consequences in modal logic which are:

\begin{enumerate}
\item a formula $\psi$ is a global consequence of $\phi$ if whenever $M,w\models\phi$ for every world $w$ of a model $M$, then $M,w\models\psi$ for every world $w$ of $M$.
\item a formula $\psi$ is a local consequence of $\phi$ if $M,w\models\phi$ implies $M,w\models\psi$ for every model $M$ and world $w$.
\end{enumerate}

Eventhough both consequences exist, in this paper we will only study global consequences and whenever $\phi\models\psi$ we mean $\psi$ is a global consequence of $\phi$. 

Two formulas $\phi$ and $\psi$ are equivalent written as $\phi\equiv\psi$ or $\models\phi\leftrightarrow\psi$ if both $\phi\models\psi$ and $\psi\models\phi$. A formula $\phi$ is said to be logically stronger than $\psi$ or $\psi$ is said to be weaker than $\phi$ if $\phi\models\psi$ and $\psi\not\models\phi$. We can always strengthen a premise and weaken a consequence as $\phi\models\psi$ implies $\phi\wedge\chi\models\psi$ and $\phi\models\psi\vee\chi$ for some formula $\chi$.

It can be noted that in Definition \ref{model} and \ref{model_truth}, if we take $R$ to be a reflexive relation then system $\mathcal{K}$ becomes system $\mathcal{T}$. There are some results in this paper which holds in system $\mathcal{T}$ only. As any theorem of $\mathcal{K}$ is a theorem in $\mathcal{T}$ so every result holding in $\mathcal{K}$ also holds in $\mathcal{T}$.

The definitions of literals, clauses, terms and formulas in modal logic $\mathcal{T}$ known as definition $D4$ in \cite{Bienvenu} are given below.

\begin{definition}\label{LCTF}
The literals L, clauses C, terms T, and formulas F are defined as follows:

\hskip 2cm $L::=a\mid\neg a\mid\Box F\mid\Diamond F$

\hskip 2cm $C::=L\mid C\vee C$

\hskip 2cm $T::=L\mid T\wedge T$

\hskip 2cm $F::=a\mid\neg a\mid F\wedge F\mid F\vee F\mid \Box F\mid\Diamond F$

\end{definition}

A formula is said to be in conjunctive normal form (CNF) if it is a conjunction of clauses and it is in disjunctive normal form (DNF) if it is a disjunction of terms. The transformation of a formula to CNF or DNF is exponential in both time and space. The number of clauses in a CNF formula $\phi$ is denoted as $\mbox{nb\_cl}(\phi)$.
 
We now present some basic properties of logical consequences and equivalences in $\mathcal{K}$ which will be used in the proofs of some theorems in our paper.

\begin{lemma}\label{DiamondBox_equiv}
Let $\phi$ and $\psi$ be modal formulas. Then the following three statements are equivalent. 
\begin{enumerate}
 \item[(i)] $\phi\equiv\psi$ 
 \item[(ii)] $\Diamond\phi\equiv\Diamond\psi$ 
\item[(iii)] $\Box\phi\equiv\Box\psi$
\end{enumerate}
\end{lemma}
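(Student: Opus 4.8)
The plan is to prove the three-way equivalence by establishing the cycle $(i)\Rightarrow(ii)\Rightarrow(iii)\Rightarrow(i)$, or more symmetrically, to show $(i)\Leftrightarrow(ii)$ and $(i)\Leftrightarrow(iii)$ separately, since the two modal cases are dual. The only genuinely nontrivial direction is proving that a modal equivalence $\Diamond\phi\equiv\Diamond\psi$ (or $\Box\phi\equiv\Box\psi$) forces $\phi\equiv\psi$; the forward directions are immediate from the semantics.

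First I would dispose of $(i)\Rightarrow(ii)$ and $(i)\Rightarrow(iii)$. Suppose $\phi\equiv\psi$, i.e. $M,w\models\phi$ iff $M,w\models\psi$ for every model $M$ and world $w$. Fix any $M=\langle W,R,v\rangle$ and $w\in W$. By Definition~\ref{model_truth}, $M,w\models\Diamond\phi$ iff there is $w'$ with $wRw'$ and $M,w'\models\phi$; since $M,w'\models\phi$ iff $M,w'\models\psi$, this holds iff $M,w\models\Diamond\psi$. Hence $\Diamond\phi\equiv\Diamond\psi$. The argument for $\Box$ is identical with ``some $w'$'' replaced by ``all $w'$''.

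For the converse directions I would argue by contraposition. Suppose $\phi\not\equiv\psi$; then without loss of generality there is a model $M_0=\langle W_0,R_0,v_0\rangle$ and a world $u\in W_0$ with $M_0,u\models\phi$ but $M_0,u\not\models\psi$ (the other case is symmetric by swapping $\phi$ and $\psi$). The idea is to build a new model in which some world ``sees'' $u$, so that the $\Diamond$- or $\Box$-formulas distinguish $\phi$ from $\psi$ there. Concretely, take a fresh world $s\notin W_0$, set $W=W_0\cup\{s\}$, let $v$ extend $v_0$ arbitrarily on $s$, and choose $R$ appropriately: for the $\Diamond$ case put $R=R_0\cup\{(s,u)\}$, so $s$ has exactly one successor $u$ with $M,u\models\phi$, giving $M,s\models\Diamond\phi$ while $M,s\not\models\Diamond\psi$ (its unique successor $u$ fails $\psi$); for the $\Box$ case the same model works since $\Box$ ranges over all successors of $s$, the only one being $u$. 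This shows $\Diamond\phi\not\equiv\Diamond\psi$ and $\Box\phi\not\equiv\Box\psi$, completing $(ii)\Rightarrow(i)$ and $(iii)\Rightarrow(i)$.

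The one point requiring a little care — and the step I expect to be the main obstacle — is the status of consequence being \emph{global} in this paper: since $\phi\equiv\psi$ is defined as mutual global consequence, I must make sure the added world $s$ does not break global truth claims. But equivalence $\phi\equiv\psi$ unfolds to ``for all $M,w$: $M,w\models\phi$ iff $M,w\models\psi$'', which is exactly the local biconditional stated in the excerpt's definition of $\equiv$, so the construction above is sound as stated; I would simply remark that $\equiv$ is insensitive to the global/local distinction because it quantifies over all worlds of all models. (Also note the result is stated for $\mathcal{K}$, and the freely-chosen $R$ above needs no reflexivity, so nothing extra is needed for $\mathcal{T}$.) Assembling the four implications then yields the claimed equivalence of $(i)$, $(ii)$, and $(iii)$.
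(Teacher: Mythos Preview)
Your proof is correct and rests on the same core idea as the paper's: use the Kripke semantics directly, and for the nontrivial direction add a fresh world with a single arrow to a witnessing world. The organization differs, however. For $(i)\Rightarrow(ii)$ and $(i)\Rightarrow(iii)$ you give the clean argument---the biconditional at every successor immediately yields the biconditional for $\Diamond$ and $\Box$---whereas the paper unnecessarily builds a new model already in this forward direction. Conversely, for $(ii)\Rightarrow(i)$ and $(iii)\Rightarrow(i)$ you argue by contraposition and place the fresh-world construction there, while the paper attempts a direct argument (take a model where the $\Diamond$-biconditional holds and pass to a successor). Your contrapositive version is tighter and avoids the quantifier slips in the paper's write-up; the paper's version has the virtue of being shorter but is less carefully stated. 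Your remark that $\equiv$ collapses the global/local distinction, and that no reflexivity is needed since the lemma is stated for $\mathcal{K}$, are both points the paper leaves implicit.
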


\begin{proof}
$(i)\Rightarrow(ii)$: Let~ $\phi\equiv\psi$. Then there is some $\mathcal{M}$ and $w$ such that $\mathcal{M},w\models\phi$ iff $\mathcal{M},w\models\psi$. Construct a new model $\mathcal{M}^{'}$ which contains the model  $\mathcal{M}$, state $w$ and an arrow from $w^{'}$ to $w$, so $\mathcal{M}^{'},w^{'}\models\Diamond\phi$ iff $\mathcal{M}^{'},w^{'}\models\Diamond\psi$. This implies $\Diamond\phi\equiv\Diamond\psi$.

$(ii)\Rightarrow(i)$: Let~ $\Diamond\phi\equiv\Diamond\psi$. Let there be a model $\mathcal{M}$ and a state $w$ such that $\mathcal{M},w\models\Diamond\phi$ iff $\mathcal{M},w\models\Diamond\psi$. Then there exists a state $w^{'}$ such that $Rww^{'}$ and $\mathcal{M},w^{'}\models\phi$ iff $\mathcal{M},w^{'}\models\psi$. This implies  $\phi\equiv\psi$. 

$(i)\Rightarrow(iii)$: Let~ $\phi\equiv\psi$. Then there is some $\mathcal{M}$ and $w$ such that $\mathcal{M},w\models\phi$ iff $\mathcal{M},w\models\psi$. Construct a new model $\mathcal{M}^{'}$ which contains the model  $\mathcal{M}, w$ and a relation $Rw^{'}w$ for each $w$ then so $\mathcal{M}^{'},w^{'}\models\Box\phi$ iff $\mathcal{M}^{'},w^{'}\models\Box\psi$. This implies $\Box\phi\equiv\Box\psi$.

$(iii)\Rightarrow(i)$: Let~ $\Box\phi\equiv\Box\psi$. Let there be a model $\mathcal{M}$ and a state $w$ such that $\mathcal{M},w\models\Box\phi$ iff $\mathcal{M},w\models\Box\psi$. Then for all state $w^{'}$ such that $Rww^{'}$ and $\mathcal{M},w^{'}\models\phi$ iff $\mathcal{M},w^{'}\models\psi$. This implies  $\phi\equiv\psi$. \hfill{$\Box$}  
\end{proof}

We now extend the definition of $\models$ with respect to a formula $Y$ written as $\models_{Y}$. 

\begin{definition}
Let $X_1, X_2$ be modal formulas and $Y$ be any propositional formula. We define $\models_{Y}$ over $MF\times MF$ (as the extension of $\models$) by $X_1\models_{Y}X_2$ iff $X_1\cup Y\models X_2$. When $X_1\models_{Y}X_2$ holds then we say that $X_2$ is a $Y$-logical consequence of $X_1$. We define the equivalence relation $\equiv_{Y}$ over $MF$ by $X_1\equiv_{Y}X_2$ iff $X_1\models_{Y}X_2$ and $X_2\models_{Y}X_1$. When $X_1\equiv_{Y}X_2$ holds we say $X_1$ and $X_2$ are $Y$-equivalent. 
\end{definition}

 We now present the following lemmas which will be used in the proofs of Theorem \ref{weaken_the_consequence} and Theorem \ref{tpi_size_less_than_pi} later.

\begin{lemma}\label{symmetry_propositional}
Let $\psi$ and $\chi$ be modal formulas, and $Y$ be any propositional formula. Then the following three statements are equivalent with respect to $\models_{Y}$.
\begin{enumerate}
\item[(i)] $\psi\models_{Y} \chi$
\item[(ii)] $\models_{Y}\neg\psi\vee\chi$
\item[(iii)] $\psi\wedge\neg\chi\models_{Y}\bot$
\end{enumerate}
\end{lemma}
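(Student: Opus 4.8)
The plan is to prove the chain of equivalences $(i)\Rightarrow(ii)\Rightarrow(iii)\Rightarrow(i)$ by unfolding the definition of $\models_{Y}$ and reducing each implication to the corresponding classical fact about ordinary modal consequence $\models$. Recall that by definition $\psi\models_{Y}\chi$ means $\psi\cup Y\models\chi$, so each of the three statements translates into a statement about $\models$: (i) becomes $\psi\cup Y\models\chi$, (ii) becomes $Y\models\neg\psi\vee\chi$ (with an empty left-hand premise other than $Y$, i.e. $\models_{Y}$ applied to a formula is just ``$Y$ globally entails that formula''), and (iii) becomes $\psi\wedge\neg\chi\cup Y\models\bot$.

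First I would show $(i)\Rightarrow(ii)$. Assuming $\psi\cup Y\models\chi$, I want $Y\models\neg\psi\vee\chi$; take any model $M$ and world $w$ (in system $\mathcal{T}$, and since we work with global consequence, assume $M,w'\models Y$ for all worlds $w'$) and argue by cases on whether $M,w\models\psi$: if not, then $M,w\models\neg\psi\vee\chi$ immediately; if so, then together with $M\models Y$ globally we get $M,w\models\chi$ from (i), hence again $M,w\models\neg\psi\vee\chi$. Next, $(ii)\Rightarrow(iii)$: from $Y\models\neg\psi\vee\chi$ I must derive $\psi\wedge\neg\chi\cup Y\models\bot$; if some $M,w$ satisfied $\psi\wedge\neg\chi$ and $M\models Y$ globally, then by (ii) $M,w\models\neg\psi\vee\chi$, contradicting $M,w\models\psi\wedge\neg\chi$, so no such model exists. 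Finally $(iii)\Rightarrow(i)$: from $\psi\wedge\neg\chi\cup Y\models\bot$ I want $\psi\cup Y\models\chi$; given $M,w\models\psi$ with $M\models Y$ globally, if $M,w\not\models\chi$ then $M,w\models\psi\wedge\neg\chi$ with $Y$ holding globally, contradicting (iii), so $M,w\models\chi$.

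The main subtlety I expect, and the one point to handle carefully, is the interaction between the \emph{global} consequence convention adopted in the paper and the definition of $\models_{Y}$ as $X_1\cup Y\models X_2$: one must be consistent that ``$X_1\cup Y\models X_2$'' is read as ``in every model where both $X_1$ and $Y$ hold at every world, $X_2$ holds at every world,'' and that a lone $\models_{Y}\varphi$ means $Y\models\varphi$ in this global sense. Once that reading is fixed, every step is just the classical propositional-style case analysis performed pointwise at an arbitrary world of an arbitrary Kripke model, and no modal-specific machinery (e.g. Lemma~\ref{DiamondBox_equiv}) is needed; the modal operators inside $\psi$ and $\chi$ are simply carried along untouched. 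I would therefore present the three implications tersely, each as a two- or three-line world-by-world argument, and note explicitly at the start that we unfold $\models_{Y}$ via its definition so the whole proof inherits the corresponding well-known equivalence for $\models$.
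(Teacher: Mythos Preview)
Your plan follows exactly the same route as the paper: the chain $(i)\Rightarrow(ii)\Rightarrow(iii)\Rightarrow(i)$, unfolding $\models_{Y}$ and arguing semantically at an arbitrary pair $(\mathcal{M},w)$.

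There is, however, a genuine gap in your $(i)\Rightarrow(ii)$ step as written. You explicitly adopt the \emph{global} reading (``assume $M,w'\models Y$ for all worlds $w'$'') and then, at a single world $w$ where $M,w\models\psi$, you invoke (i) to conclude $M,w\models\chi$. But under global consequence, (i) says only that if $\psi\wedge Y$ holds at \emph{every} world then $\chi$ does too; having $\psi$ at just one world $w$ is not enough. In fact the implication $(i)\Rightarrow(ii)$ is false under strict global consequence: take $\psi=p$, $\chi=\Box p$, $Y=\top$; then $p\models\Box p$ holds globally, yet $\neg p\vee\Box p$ is not valid (a two-world model with $p$ true at one world and false at an accessible successor refutes it).

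The paper's own proof sidesteps this by tacitly reading $\psi\wedge Y\models\chi$ \emph{locally} (``if $\mathcal{M},w\models\psi$ and $\mathcal{M},w\models Y$ then $\mathcal{M},w\models\chi$ for all $\mathcal{M}$ and $w$''), and under that reading your pointwise case analysis is perfectly correct. So either drop the global framing and argue pointwise as the paper does, or flag that the lemma presupposes the local reading of $\models$; do not try to force the argument through under the global convention.
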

\begin{proof}
 $(i)\Rightarrow (ii)$:  Let $\mathcal{M}=\langle W,R,v\rangle$ be a model and $w$ be a state in $W$. Let $\psi\models_{Y}\chi$. Then $\psi\wedge Y\models\chi$. So if $\mathcal{M}, w\models\psi$ and $\mathcal{M},w\models Y$ then $\mathcal{M},w\models\chi$ for all $\mathcal{M}$ and $w$. As $p\rightarrow q\equiv\neg p\vee q$, so $\mathcal{M}, w\not\models\psi$ or $\mathcal{M},w\not\models Y$ or $\mathcal{M},w\models\chi$ for all $\mathcal{M}$ and $w$. So $\mathcal{M},w\not\models Y$ or  $(\mathcal{M}, w\not\models\psi$ or $\mathcal{M},w\models\chi)$ for all $\mathcal{M}$ and $w$. This implies, if $\mathcal{M},w\models Y$ then  $(\mathcal{M}, w\not\models\psi$ or $\mathcal{M},w\models\chi)$ for all $\mathcal{M}$ and $w$. If $\mathcal{M},w\models Y$ then $(\mathcal{M}, w\models\neg\psi$ or $\mathcal{M},w\models\chi)$ for all $\mathcal{M}$ and $w$. If $\mathcal{M},w\models Y$ then $\mathcal{M},w\models\neg\psi\vee\chi$ for all $\mathcal{M}$ and $w$. So $Y\models \neg\psi\vee\chi$. Hence, $\models_{Y}\neg\psi\vee\chi$.

$(ii)\Rightarrow (iii)$: If $\mathcal{M},w\models Y$ then $\mathcal{M},w\models\neg\psi\vee\chi$ for all $\mathcal{M}$ and $w$. So, if $\mathcal{M},w\models Y$ then $\mathcal{M},w\not\models\neg(\neg\psi\vee\chi)$ for all $\mathcal{M}$ and $w$. If $\mathcal{M},w\models Y$ then $\mathcal{M},w\not\models(\psi\wedge\neg\chi)$ for all $\mathcal{M}$ and $w$. $\mathcal{M},w\not\models Y$ or $\mathcal{M},w\not\models(\psi\wedge\neg\chi)$ for all $\mathcal{M}$ and $w$. Hence, $\neg(\mathcal{M},w\models Y \mbox{~and~} \mathcal{M},w\models\psi\wedge\neg\chi)$ holds for all $\mathcal{M}$ and $w$. This implies, $\neg(\mathcal{M},w\models Y\wedge\psi\wedge\neg\chi)$ holds for all $\mathcal{M}$ and $w$. $\mathcal{M},w\not\models Y\wedge\psi\wedge\neg\chi$ for all $\mathcal{M}$ and $w$. So, $Y\wedge\psi\wedge\neg\chi\models\bot$. This implies $\psi\wedge\neg\chi\models_{Y}\bot$.

$(iii)\Rightarrow (i)$: Let $\psi\wedge\neg\chi\models_{Y}\bot$. This implies $Y\wedge\psi\wedge\neg\chi\models\bot$. So, $Y\wedge\psi\models\chi$. Hence, $\psi\models_{Y}\chi$. \hfill{$\Box$}
\end{proof}

\begin{lemma}\label{symmetry_entailment} 
Let $\psi$ and $\chi$ be modal formulas, and $Y$ be any propositional formula. Then the following three statements are equivalent in modal system $\mathcal{K}$.
\begin{enumerate}
\item[(i)] $\psi\models_{Y}\chi$.
\item[(ii)] $\Diamond\psi\models_{\Box Y}\Diamond\chi$.
\item[(iii)] $\Box\psi\models_{\Box Y}\Box\chi$.
\end{enumerate}
\end{lemma}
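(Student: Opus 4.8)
\medskip
\noindent\textbf{Proof plan.}
I would prove the cyclic chain $(i)\Rightarrow(ii)\Rightarrow(iii)\Rightarrow(i)$ directly on Kripke models, in the spirit of the proof of Lemma~\ref{DiamondBox_equiv}. In each step I unfold $\models_{Z}$ by its definition (so $A\models_{Z}B$ amounts to $A\wedge Z\models B$, reasoned pointwise over all $\mathcal{M}$ and $w$ exactly as in the proof of Lemma~\ref{symmetry_propositional}) and I use repeatedly the elementary fact that $\mathcal{M},w\models\Box Z$ together with $wRw'$ yields $\mathcal{M},w'\models Z$. No frame property is invoked, so the argument lives in $\mathcal{K}$ and hence also in $\mathcal{T}$.

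For $(i)\Rightarrow(ii)$: assume $\psi\wedge Y\models\chi$ and take $\mathcal{M},w$ with $\mathcal{M},w\models\Diamond\psi$ and $\mathcal{M},w\models\Box Y$. A successor $w'$ of $w$ witnessing $\Diamond\psi$ satisfies $\psi$, and it satisfies $Y$ because $\mathcal{M},w\models\Box Y$; hence $\mathcal{M},w'\models\psi\wedge Y$, so $\mathcal{M},w'\models\chi$ by hypothesis, so $\mathcal{M},w\models\Diamond\chi$. For $(ii)\Rightarrow(iii)$: given $\mathcal{M},w\models\Box\psi$ and $\mathcal{M},w\models\Box Y$, fix an arbitrary successor $w'$ of $w$ (if there is none, $\Box\chi$ holds at $w$ vacuously); then $\mathcal{M},w'\models\psi$ and $\mathcal{M},w'\models Y$. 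Form $\mathcal{M}'$ from $\mathcal{M}$ by adjoining a fresh world $u$ with the single edge $u\,R'\,w'$ and no edge into $u$. A straightforward induction on formula structure shows that $\mathcal{M}'$ and $\mathcal{M}$ agree at every old world (the submodel generated by any old world is left untouched, since $u$ is unreachable), so $\mathcal{M}',u\models\Diamond\psi$ and $\mathcal{M}',u\models\Box Y$ because $w'$ is $u$'s unique successor; by $(ii)$, $\mathcal{M}',u\models\Diamond\chi$, which — again because $w'$ is the only successor of $u$ — forces $\mathcal{M}',w'\models\chi$, i.e. $\mathcal{M},w'\models\chi$. As $w'$ was arbitrary, $\mathcal{M},w\models\Box\chi$. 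For $(iii)\Rightarrow(i)$: given $\mathcal{M},w\models\psi$ and $\mathcal{M},w\models Y$, adjoin a fresh world $u$ with the single edge $u\,R'\,w$ and no incoming edge; then $\mathcal{M}',u\models\Box\psi$ and $\mathcal{M}',u\models\Box Y$, so $\mathcal{M}',u\models\Box\chi$ by $(iii)$, whence $\mathcal{M}',w\models\chi$ and therefore $\mathcal{M},w\models\chi$.

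The forward steps $(i)\Rightarrow(ii)$ and $(ii)\Rightarrow(iii)$ are just monotonicity of $\Diamond$ and $\Box$ and are routine; the delicate point is the grafting construction used in $(ii)\Rightarrow(iii)$ and $(iii)\Rightarrow(i)$ — one must check that attaching a world with no incoming edges leaves truth values at all old worlds unchanged, and that the modal premises ($\Box Y$, $\Diamond\psi$, $\Box\psi$) are genuinely recovered at the new world precisely because it is given a single successor at which $\psi$ and $Y$ hold. A more compact alternative would be to apply Lemma~\ref{symmetry_propositional} (whose proof in fact goes through unchanged for an arbitrary modal $Y$) to reduce each of the three statements to unsatisfiability of $Y\wedge\psi\wedge\neg\chi$, using $\Box\psi\wedge\Box Y\equiv\Box(\psi\wedge Y)$ and $\neg\Box\chi\equiv\Diamond\neg\chi$; but that route still needs the same one-new-world construction for the nontrivial implication.
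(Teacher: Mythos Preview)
Your argument is correct, and it differs from the paper's in two structural respects. The paper does not argue a cycle: it proves the four implications $(i)\Leftrightarrow(ii)$ and $(i)\Leftrightarrow(iii)$ separately, with $(i)\Rightarrow(ii)$ and $(ii)\Rightarrow(i)$ handled by contrapositive (satisfiability of $\psi\wedge Y\wedge\neg\chi$ versus satisfiability of $\Diamond\psi\wedge\Box Y\wedge\Box\neg\chi$, bridged by a fresh-world construction similar to yours), $(i)\Rightarrow(iii)$ declared analogous, and $(iii)\Rightarrow(i)$ done \emph{algebraically}: from $\Box\psi\wedge\Box Y\models\Box\chi$ the paper contraposes to $\Diamond\neg\chi\models\Diamond\neg\psi\vee\Diamond\neg Y\models\Diamond(\neg\psi\vee\neg Y)$ and then invokes Lemma~\ref{DiamondBox_equiv} to strip the diamonds. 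Your cycle replaces this last step by another single-successor grafting, which is arguably cleaner since Lemma~\ref{DiamondBox_equiv} is stated for equivalences rather than one-sided entailments. Your step $(ii)\Rightarrow(iii)$ has no direct counterpart in the paper; it is the one place where you genuinely need the care you take in verifying that the grafted world leaves truth at old worlds unchanged (the paper's constructions also rely on this but do not make it explicit). In short: the paper's route is a star through $(i)$ with one algebraic leg, yours is a purely model-theoretic cycle; both are sound, and your explicit remark about generated submodels is a point of rigor the paper omits.
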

\begin{proof} Let $\mathcal{M}=\langle W,R,v\rangle$ be a model and $w$ be a state in $W$.

$(i)\Rightarrow (ii)$: Let $\Diamond\psi\not\models_{\Box Y}\Diamond\chi$. So $\Diamond\psi\wedge\Box Y\not\models\Diamond\chi$. $\Diamond\psi\wedge\Box Y\wedge\neg\Diamond\chi\not\models\bot$. Then there exists $\mathcal{M}$ and $w$ such that $\mathcal{M},w\models \Diamond\psi\wedge\Box Y\wedge\neg\Diamond\chi$. So, $\mathcal{M},w\models \Diamond\psi\wedge\Box Y\wedge\Box\neg\chi$. $\mathcal{M},w\models \Diamond\psi$ and $\mathcal{M},w\models\Box Y$ and $\mathcal{M},w\models\Box\neg\chi$. Then for all  state $w^{'}$ such that $Rww^{'}$ and $\mathcal{M}, w^{'}\models\psi$, $\mathcal{M}, w^{'}\models Y$, and $\mathcal{M}, w^{'}\models\neg\chi$. So $\psi\wedge Y\wedge\neg\chi\not\models\bot$. $\psi\wedge Y\not\models\chi$. Hence, $\psi\not\models_{Y}\chi$.  

$(ii)\Rightarrow(i)$: Let $\psi\not\models_{Y}\chi$. $\psi\wedge Y\not\models\chi$. Then there exists a model $\mathcal{M}$ and a state $w$ such that $\mathcal{M},w\models\psi\wedge Y$ and $\mathcal{M},w\not\models\chi$. So, $\mathcal{M},w\models\psi\wedge Y$ and $\mathcal{M},w\models\neg\chi$. $\mathcal{M},w\models\psi$, $\mathcal{M},w\models Y$ and $\mathcal{M},w\models\neg\chi$. Let us create a new model $\mathcal{M}^{'}$ by adding a new world $w^{'}$ and an arrow from $w^{'}$ to each $w$. Then, $\mathcal{M}^{'},w^{'}\models\Diamond\psi$, $\mathcal{M}^{'},w^{'}\models \Box Y$ and $\mathcal{M}^{'},w^{'}\models\Box\neg\chi$. So, $\mathcal{M}^{'},w^{'}\models\Diamond\psi\wedge\Box Y\wedge\Box\neg\chi$. $\Diamond\psi\wedge\Box Y\wedge\Box\neg\chi\not\models\bot$.  $\Diamond\psi\wedge\Box Y\not\models\neg\Box\neg\chi$. $\Diamond\psi\wedge\Box Y\not\models\Diamond\chi$. So $\Diamond\psi\not\models_{\Box Y}\Diamond\chi$.

$(i)\Rightarrow(iii)$: It is similar to the proof of $(i)\Rightarrow(ii)$.

$(iii)\Rightarrow(i)$: Let $\Box\psi\models_{\Box Y}\Box\chi$. So, $\Box\psi\wedge\Box Y\models\Box\chi$. This implies $\neg\Box\chi\models\neg(\Box\psi\wedge\Box Y)$. So, $\neg\Box\chi\models\neg\Box\psi\vee\neg\Box Y$. This implies, $\Diamond\neg\chi\models\Diamond\neg\psi\vee\Diamond\neg Y$. By property of $\mathcal{K}$ , we have $\Diamond\neg\chi\models\Diamond(\neg\psi\vee\neg Y)$. Again by Lemma \ref{DiamondBox_equiv}, we get $\neg\chi\models\neg\psi\vee\neg Y$. So $\neg\chi\models\neg(\psi\wedge Y)$. This implies $\psi\wedge Y\models\chi$. So, $\psi\models_{Y}\chi$. \hfill{$\Box$}
\end{proof}

The following lemma which holds in $\mathcal{K}$ is used in the proof of Theorem \ref{weaken_the_consequence} and \ref{tpi_size_less_than_pi}.

\begin{theorem}\label{several_possibilities_wrt_boxY}
Let $\beta_1,\beta_2,\ldots,\beta_m,\gamma_1,\gamma_2,\ldots,\gamma_n, \phi_1,\phi_2,\ldots,\phi_q,\xi_1,\xi_2,\ldots,\xi_r$ be modal formulas and $\alpha_1,\alpha_2,\ldots,\alpha_l, \psi_1,\psi_2,\ldots,\psi_p, Y$ be propositional formulas. Then
$(\vee_{i=1}^{l}\alpha_i)\wedge(\vee_{j=1}^{m}\Diamond\beta_j)\wedge(\vee_{k=1}^{n}\Box\gamma_{k})\wedge((\vee_{i=1}^{l}\alpha_i)\vee(\vee_{j=1}^{m}\Diamond\beta_j))\wedge((\vee_{i=1}^{l}\alpha_i)\vee(\vee_{k=1}^{n}\Box\gamma_{k}))\wedge((\vee_{j=1}^{m}\Diamond\beta_j)\vee(\vee_{k=1}^{n}\Box\gamma_{k}))\wedge((\vee_{i=1}^{l}\alpha_i)\vee(\vee_{j=1}^{m}\Diamond\beta_j)\vee(\vee_{k=1}^{n}\Box\gamma_{k}))\wedge\psi_1\wedge\ldots\wedge\psi_p\wedge\Box\phi_1\wedge\ldots\wedge\Box\phi_q\wedge\Diamond\xi_1\wedge\ldots\wedge\Diamond\xi_r\models_{\Box Y}\bot$ if and only if
\begin{enumerate}
\item $(\vee_{i=1}^{l}\alpha_i)\wedge\psi_1\wedge\ldots\wedge\psi_p\models_{Y}\bot$ or
\item $(\vee_{j=1}^{m}\beta_j)\wedge\phi_1\wedge\ldots\wedge\phi_q\models_{Y}\bot$ or
\item $(\vee_{k=1}^{n}\gamma_{k})\wedge\xi_u\wedge\phi_1\wedge\ldots\wedge\phi_q\models_{Y}\bot$ for $1\leq u\leq r$ or
\item $((\vee_{i=1}^{l}\alpha_i)\vee(\vee_{j=1}^{m}\beta_j))\wedge\phi_1\wedge\ldots\wedge\phi_q\models_{Y}\bot$ or
\item $((\vee_{i=1}^{l}\alpha_i)\vee(\vee_{k=1}^{n}\gamma_{k}))\wedge\xi_u\wedge\phi_1\wedge\ldots\wedge\phi_q\models_{Y}\bot$ for $1\leq u\leq r$ or
\item $((\vee_{j=1}^{m}\beta_j)\vee(\vee_{k=1}^{n}\gamma_{k}))\wedge\phi_1\wedge\ldots\wedge\phi_q\models_{Y}\bot$ or
\item $((\vee_{i=1}^{l}\alpha_i)\vee(\vee_{j=1}^{m}\beta_j)\vee(\vee_{k=1}^{n}\gamma_{k}))\wedge\phi_1\wedge\ldots\wedge\phi_q\models_{Y}\bot$.
\end{enumerate}
\end{theorem}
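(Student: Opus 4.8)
Write $\Gamma$ for the whole left-hand side, i.e.\ the conjunction of the seven clauses with the term $\psi_1\wedge\cdots\wedge\psi_p\wedge\Box\phi_1\wedge\cdots\wedge\Box\phi_q\wedge\Diamond\xi_1\wedge\cdots\wedge\Diamond\xi_r$. Unfolding the definition of $\models_{\Box Y}$ and of unsatisfiability, the statement says that $\Gamma\wedge\Box Y$ has no pointed model iff one of (1)--(7) holds, and, by the definition of $\models_Y$, each of (1)--(7) just asserts the propositional unsatisfiability of a conjunction of the form $(\cdots)\wedge Y$ with the modal subformulas $\beta_j,\gamma_k,\xi_u,\phi_t$ treated as opaque. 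I would argue semantically over Kripke models, using Lemma~\ref{symmetry_propositional} and Lemma~\ref{symmetry_entailment} to pass between $\models_Y$, $\models_{\Box Y}$ and (un)satisfiability, and Lemma~\ref{DiamondBox_equiv} to strip leading modalities when convenient. One should note that the last four clauses of $\Gamma$ are weakenings of the first three, so $\Gamma$ is equivalent to $(\vee_i\alpha_i)\wedge(\vee_j\Diamond\beta_j)\wedge(\vee_k\Box\gamma_k)$ conjoined with the term; still, the seven conditions on the right are really the ``which sub-part is to blame'' case split, so I would keep all seven clauses explicit when matching them to (1)--(7).

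\textbf{The easy direction $(\Leftarrow)$.} Assume some condition among (1)--(7) holds and, for contradiction, let $\mathcal{M},w\models\Gamma\wedge\Box Y$. From $w\models\Box Y\wedge\Box\phi_1\wedge\cdots\wedge\Box\phi_q$, every $R$-successor of $w$ satisfies $Y\wedge\phi_1\wedge\cdots\wedge\phi_q$; from $w\models\Diamond\xi_u$ there is a successor $w_u\models\xi_u\wedge Y\wedge\phi_1\wedge\cdots\wedge\phi_q$; from $w\models\vee_j\Diamond\beta_j$ there is a successor $w_\beta\models\beta_{j_0}\wedge Y\wedge\phi_1\wedge\cdots\wedge\phi_q$ for some $j_0$; from $w\models\vee_k\Box\gamma_k$ some $\Box\gamma_{k_0}$ holds at $w$, so $w_\beta$ and every $w_u$ additionally satisfy $\gamma_{k_0}$; and $w$ itself satisfies $\psi_1\wedge\cdots\wedge\psi_p$ and some $\alpha_{i_0}$ (here, for the conditions that pair the $\alpha_i$ with $Y$, one also needs $w\models Y$, which follows from $w\models\Box Y$ by reflexivity, i.e.\ in $\mathcal{T}$). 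Each of (1)--(7) now names a conjunction that is $Y$-unsatisfiable yet is realized at one of the worlds $w,w_\beta,w_1,\dots,w_r$ just described (e.g.\ (1) at $w$, (2) at $w_\beta$, (3) at the relevant $w_u$, and the mixed cases at whichever of these realizes the chosen disjunct), which is the contradiction.

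\textbf{The hard direction $(\Rightarrow)$.} I would prove the contrapositive: assuming all of (1)--(7) fail, build a pointed model of $\Gamma\wedge\Box Y$. Each failed condition supplies, via Lemma~\ref{symmetry_propositional}, a satisfying world together with $Y$: the failure of (1) gives a valuation for a root $w$ making some $\alpha_{i_0}$, the $\psi$'s and $Y$ true; the failures of (2),(3) and of the mixed (4)--(7) give satisfying worlds for $\beta_j$, for $\xi_u$, and for the various disjunctions, each carrying $\phi_1\wedge\cdots\wedge\phi_q\wedge Y$. These are glued into one model: $w$ with the valuation above, one successor realizing $\Diamond\beta_{j_0}$, one successor per $\Diamond\xi_u$, and if needed a ``default'' successor carrying just $\gamma_{k_0}\wedge\phi_1\wedge\cdots\wedge\phi_q\wedge Y$, so that $\Box Y$, all $\Box\phi_t$ and the single chosen $\Box\gamma_{k_0}$ hold at $w$; then verify $w\models\Gamma$ and $w\models\Box Y$ from the truth definition.

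\textbf{Expected main obstacle.} The crux is the coherence of this construction: clause~3 can be satisfied only by making one disjunct $\Box\gamma_{k_0}$ true, so the index $k_0$ must be chosen so that $\gamma_{k_0}$ is simultaneously compatible (modulo $\phi_1\wedge\cdots\wedge\phi_q\wedge Y$) with the $\Diamond\beta_{j_0}$-obligation \emph{and} with every $\Diamond\xi_u$-obligation, while the root valuation must fit all the propositional clauses at once. Showing that the negations of (1)--(7) leave exactly enough room to pick $i_0$, $j_0$, $k_0$ and the successor valuations consistently is the real content of the argument; this is precisely where the mixed conditions (4)--(7) (and the bookkeeping around whether $Y$ is forced at the root, hence the use of reflexivity/$\mathcal{T}$) are needed. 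Once the joint choice is exhibited, the remaining model-checking is routine.
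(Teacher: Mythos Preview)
Your overall strategy mirrors the paper's: both directions are argued semantically, the $(\Leftarrow)$ direction by taking a putative model of $\Gamma\wedge\Box Y$ and reading off successor witnesses that contradict whichever of (1)--(7) holds, and the $(\Rightarrow)$ direction by contrapositive via an explicit Kripke-model construction from the seven satisfiability assumptions. Your treatment of $(\Leftarrow)$ is in fact more careful than the paper's: you correctly note that obtaining $w\models Y$ at the root from $w\models\Box Y$ requires reflexivity, so condition~(1) only produces a contradiction in $\mathcal{T}$, whereas the paper asserts the result for $\mathcal{K}$ and handles this step by an unclear auxiliary construction.

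For the $(\Rightarrow)$ direction you have put your finger on exactly the right obstacle, and this is where both your proposal and the paper's argument break down. The paper passes from ``each of the seven formulas is $Y$-satisfiable'' directly to ``there exist a single $\mathcal{M}'$ and $w'$ satisfying all seven simultaneously'' without justification; you instead acknowledge that one must choose a single index $k_0$ so that $\gamma_{k_0}$ is compatible (modulo $\phi_1\wedge\cdots\wedge\phi_q\wedge Y$) with the $\Diamond\beta$ witness and with every $\Diamond\xi_u$ witness, and you correctly flag this coherence as the real content. The difficulty is that this coherence cannot be established in general: take $n=r=2$, $p=q=0$, $l=m=1$, and set $Y=\alpha_1=\beta_1=b$, $\gamma_1=a$, $\gamma_2=\neg a$, $\xi_1=\neg a$, $\xi_2=a$. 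Then every one of (1)--(7) fails (in particular $(\gamma_1\vee\gamma_2)\wedge\xi_u\wedge Y$ is satisfiable for both $u=1,2$), yet $\Gamma\wedge\Box Y$ entails $(\Box a\vee\Box\neg a)\wedge\Diamond a\wedge\Diamond\neg a$, which is unsatisfiable in $\mathcal{K}$ and in $\mathcal{T}$ alike. Hence no single $k_0$ can serve all the $\Diamond\xi_u$ obligations, the model construction cannot be completed, and the $(\Rightarrow)$ direction as stated fails. Your proposal is therefore blocked at precisely the point you anticipated; the paper's proof simply glosses over the same gap by assuming a common satisfying world.
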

\begin{proof}
Suppose 
\begin{enumerate}
\item $(\vee_{i=1}^{l}\alpha_i)\wedge\psi_1\wedge\ldots\wedge\psi_p\not\models_{Y}\bot$,  
 \item $(\vee_{j=1}^{m}\beta_j)\wedge\phi_1\wedge\ldots\wedge\phi_q\not\models_{Y}\bot$, 
 \item $(\vee_{k=1}^{n}\gamma_{k})\wedge\xi_u\wedge\phi_1\wedge\ldots\wedge\phi_q\not\models_{Y}\bot$ for $1\leq u\leq r$,
\item $((\vee_{i=1}^{l}\alpha_i)\vee(\vee_{j=1}^{m}\beta_j))\wedge\phi_1\wedge\ldots\wedge\phi_q\not\models_{Y}\bot$,
 \item $((\vee_{i=1}^{l}\alpha_i)\vee(\vee_{k=1}^{n}\gamma_{k}))\wedge\xi_u\wedge\phi_1\wedge\ldots\wedge\phi_q\not\models_{Y}\bot$ for $1\leq u\leq r$,
 \item $((\vee_{j=1}^{m}\beta_j)\vee(\vee_{k=1}^{n}\gamma_{k}))\wedge\phi_1\wedge\ldots\wedge\phi_q\not\models_{Y}\bot$, and
\item $((\vee_{i=1}^{l}\alpha_i)\vee(\vee_{j=1}^{m}\beta_j)\vee(\vee_{k=1}^{n}\gamma_{k}))\wedge\phi_1\wedge\ldots\wedge\phi_q\not\models_{Y}\bot$. 
\end{enumerate}

Then there exists a model $\mathcal{M}^{'}$ and a world $w^{'}$ such that
\begin{enumerate}
\item $\mathcal{M}^{'}, w^{'}\models(\vee_{i=1}^{l}\alpha_i)\wedge\psi_1\wedge\ldots\wedge\psi_p\wedge Y$
\item $\mathcal{M}^{'},w^{'}\models(\vee_{j=1}^{m}\beta_j)\wedge\phi_1\wedge\ldots\wedge\phi_q\wedge Y$, 
 \item $\mathcal{M}^{'},w^{'}\models(\vee_{k=1}^{n}\gamma_{k})\wedge\xi_u\wedge\phi_1\wedge\ldots\wedge\phi_q\wedge Y$ for $1\leq u\leq r$,
 \item $\mathcal{M}^{'},w^{'}\models((\vee_{i=1}^{l}\alpha_i)\vee(\vee_{j=1}^{m}\beta_j))\wedge\phi_1\wedge\ldots\wedge\phi_q\wedge Y$,
 \item $\mathcal{M}^{'},w^{'}\models((\vee_{i=1}^{l}\alpha_i)\vee(\vee_{k=1}^{n}\gamma_{k}))\wedge\xi_u\wedge\phi_1\wedge\ldots\wedge\phi_q\wedge Y$ for $1\leq u\leq r$,
 \item $\mathcal{M}^{'},w^{'}\models((\vee_{j=1}^{m}\beta_j)\vee(\vee_{k=1}^{n}\gamma_{k}))\wedge\phi_1\wedge\ldots\wedge\phi_q\wedge Y$, and
\item $\mathcal{M}^{'},w^{'}\models((\vee_{i=1}^{l}\alpha_i)\vee(\vee_{j=1}^{m}\beta_j)\vee(\vee_{k=1}^{n}\gamma_{k}))\wedge\phi_1\wedge\ldots\wedge\phi_q\wedge Y$.
\end{enumerate}

Hence, we obtain the followings from the above statements: 
\begin{enumerate}
\item As $\mathcal{M}^{'}, w^{'}\models(\vee_{i=1}^{l}\alpha_i)\wedge\psi_1\wedge\ldots\wedge\psi_p$ and $\mathcal{M}^{'}, w^{'}\models Y$, let there be a propositional model w of $(\vee_{i=1}^{l}\alpha_i)\wedge\psi_1\wedge\ldots\wedge\psi_p$ such that $\mathcal{M}^{'}, w\models(\vee_{i=1}^{l}\alpha_i)\wedge\psi_1\wedge\ldots\wedge\psi_p$ and $\mathcal{M}^{'}, w^{'}\models Y$. Construct a new model $\mathcal{M}$ which contains the model $\mathcal{M}^{'}$, $w$ and a relation $Rww^{'}$ for each $w^{'}$ then $\mathcal{M}, w\models(\vee_{i=1}^{l}\alpha_i)\wedge\psi_1\wedge\ldots\wedge\psi_p$ and $\mathcal{M}, w\models \Box Y$. So, $\mathcal{M}, w\models(\vee_{i=1}^{l}\alpha_i)\wedge\psi_1\wedge\ldots\wedge\psi_p\wedge\Box Y$.
\item Construct a new model $\mathcal{M}$ which contains the model $\mathcal{M}^{'}$, state $w^{'}$ and add a new world $w$ and a relation $Rww^{'}$ for each $w^{'}$ then $\mathcal{M},w\models(\vee_{j=1}^{m}\Diamond\beta_j)\wedge\Box\phi_1\wedge\ldots\wedge\Box\phi_q\wedge \Box Y$.
\item Similarly like (2) we have $\mathcal{M},w\models(\vee_{k=1}^{n}\Box\gamma_{k})\wedge\Diamond\xi_u\wedge\Box\phi_1\wedge\ldots\wedge\Box\phi_q\wedge \Box Y$ for $1\leq u\leq r$.
\item  As $\mathcal{M}^{'},w^{'}\models(\vee_{i=1}^{l}\alpha_i)\wedge\phi_1\wedge\ldots\wedge\phi_q\wedge Y$ or $\mathcal{M}^{'},w^{'}\models (\vee_{j=1}^{m}\beta_j)\wedge\phi_1\wedge\ldots\wedge\phi_q\wedge Y$. Let there be a propositional model $w$ of $(\vee_{i=1}^{l}\alpha_i)$ so $(\mathcal{M}^{'},w\models(\vee_{i=1}^{l}\alpha_i)\wedge\mathcal{M}^{'},w^{'}\models\phi_1\wedge\ldots\wedge\phi_q\wedge Y)\vee\mathcal{M}^{'},w^{'}\models (\vee_{j=1}^{m}\beta_j)\wedge\phi_1\wedge\ldots\wedge\phi_q\wedge Y$. Construct a new model $\mathcal{M}$ which contains the model $\mathcal{M}^{'}$, $w$ and a relation $Rww^{'}$ for each $w^{'}$ then  $(\mathcal{M},w\models(\vee_{i=1}^{l}\alpha_i)\wedge\mathcal{M},w\models\Box\phi_1\wedge\ldots\wedge\Box\phi_q\wedge \Box Y)\vee\mathcal{M},w\models (\vee_{j=1}^{m}\Diamond\beta_j)\wedge\Box\phi_1\wedge\ldots\wedge\Box\phi_q\wedge \Box Y$. So $(\mathcal{M},w\models(\vee_{i=1}^{l}\alpha_i)\wedge \Box\phi_1\wedge\ldots\wedge\Box\phi_q\wedge \Box Y)$ or  $(\mathcal{M},w\models (\vee_{j=1}^{m}\Diamond\beta_j)\wedge\Box\phi_1\wedge\ldots\wedge\Box\phi_q\wedge \Box Y)$. By distributivity, $(\mathcal{M},w\models((\vee_{i=1}^{l}\alpha_i)\vee(\vee_{j=1}^{m}\Diamond\beta_j))\wedge \Box\phi_1\wedge\ldots\wedge\Box\phi_q\wedge \Box Y$.
\item Similarly like (4) we have $\mathcal{M},w\models((\vee_{i=1}^{l}\alpha_i)\vee(\vee_{k=1}^{n}\Box\gamma_{k}))\wedge\Diamond\xi_u\wedge\Box\phi_1\wedge\ldots\wedge\Box\phi_q\wedge\Box Y$ for $1\leq u\leq r$,
\item Similarly like (2) or (3) we have $\mathcal{M},w\models((\vee_{j=1}^{m}\Diamond\beta_j)\vee(\vee_{k=1}^{n}\Box\gamma_{k}))\wedge\Box\phi_1\wedge\ldots\wedge\Box\phi_q\wedge\Box Y$.
\item Similarly like (4) we have $\mathcal{M},w\models((\vee_{i=1}^{l}\alpha_i)\vee(\vee_{j=1}^{m}\Diamond\beta_j)\vee(\vee_{k=1}^{n}\Box\gamma_{k}))\wedge\Box\phi_1\wedge\ldots\wedge\Box\phi_q\wedge\Box Y$.
\end{enumerate}

From (1) to (7) we get $\mathcal{M},w\models(\vee_{i=1}^{l}\alpha_i)\wedge(\vee_{j=1}^{m}\Diamond\beta_j)\wedge(\vee_{k=1}^{n}\Box\gamma_{k})\wedge((\vee_{i=1}^{l}\alpha_i)\vee(\vee_{j=1}^{m}\Diamond\beta_j))\wedge((\vee_{i=1}^{l}\alpha_i)\vee(\vee_{k=1}^{n}\Box\gamma_{k}))\wedge((\vee_{j=1}^{m}\Diamond\beta_j)\vee(\vee_{k=1}^{n}\Box\gamma_{k}))\wedge((\vee_{i=1}^{l}\alpha_i)\vee(\vee_{j=1}^{m}\Diamond\beta_j)\vee(\vee_{k=1}^{n}\Box\gamma_{k}))\wedge\psi_1\wedge\ldots\wedge\psi_p\wedge\Box\phi_1\wedge\ldots\wedge\Box\phi_q\wedge\Diamond\xi_1\wedge\ldots\wedge\Diamond\xi_r\wedge\Box Y$. This implies, 
$(\vee_{i=1}^{l}\alpha_i)\wedge(\vee_{j=1}^{m}\Diamond\beta_j)\wedge(\vee_{k=1}^{n}\Box\gamma_{k})\wedge((\vee_{i=1}^{l}\alpha_i)\vee(\vee_{j=1}^{m}\Diamond\beta_j))\wedge((\vee_{i=1}^{l}\alpha_i)\vee(\vee_{k=1}^{n}\Box\gamma_{k}))\wedge((\vee_{j=1}^{m}\Diamond\beta_j)\vee(\vee_{k=1}^{n}\Box\gamma_{k}))\wedge((\vee_{i=1}^{l}\alpha_i)\vee(\vee_{j=1}^{m}\Diamond\beta_j)\vee(\vee_{k=1}^{n}\Box\gamma_{k}))\wedge\psi_1\wedge\ldots\wedge\psi_p\wedge\Box\phi_1\wedge\ldots\wedge\Box\phi_q\wedge\Diamond\xi_1\wedge\ldots\wedge\Diamond\xi_r\not\models_{\Box Y}\bot$.

Conversely, Suppose $(\vee_{i=1}^{l}\alpha_i)\wedge(\vee_{j=1}^{m}\Diamond\beta_j)\wedge(\vee_{k=1}^{n}\Box\gamma_{k})\wedge((\vee_{i=1}^{l}\alpha_i)\vee(\vee_{j=1}^{m}\Diamond\beta_j))\wedge((\vee_{i=1}^{l}\alpha_i)\vee(\vee_{k=1}^{n}\Box\gamma_{k}))\wedge((\vee_{j=1}^{m}\Diamond\beta_j)\vee(\vee_{k=1}^{n}\Box\gamma_{k}))\wedge((\vee_{i=1}^{l}\alpha_i)\vee(\vee_{j=1}^{m}\Diamond\beta_j)\vee(\vee_{k=1}^{n}\Box\gamma_{k}))\wedge\psi_1\wedge\ldots\wedge\psi_p\wedge\Box\phi_1\wedge\ldots\wedge\Box\phi_q\wedge\Diamond\xi_1\wedge\ldots\wedge\Diamond\xi_r\not\models_{\Box Y}\bot$. Then there exists a model $\mathcal{M}$ and a state $w$ such that 
$\mathcal{M},w\models(\vee_{i=1}^{l}\alpha_i)\wedge(\vee_{j=1}^{m}\Diamond\beta_j)\wedge(\vee_{k=1}^{n}\Box\gamma_{k})\wedge((\vee_{i=1}^{l}\alpha_i)\vee(\vee_{j=1}^{m}\Diamond\beta_j))\wedge((\vee_{i=1}^{l}\alpha_i)\vee(\vee_{k=1}^{n}\Box\gamma_{k}))\wedge((\vee_{j=1}^{m}\Diamond\beta_j)\vee(\vee_{k=1}^{n}\Box\gamma_{k}))\wedge((\vee_{i=1}^{l}\alpha_i)\vee(\vee_{j=1}^{m}\Diamond\beta_j)\vee(\vee_{k=1}^{n}\Box\gamma_{k}))\wedge\psi_1\wedge\ldots\wedge\psi_p\wedge\Box\phi_1\wedge\ldots\wedge\Box\phi_q\wedge\Diamond\xi_1\wedge\ldots\wedge\Diamond\xi_r\wedge\Box Y$. Then the following must hold. 
\begin{enumerate}
\item $\mathcal{M}, w\models(\vee_{i=1}^{l}\alpha_i)\wedge\psi_1\wedge\ldots\wedge\psi_p\wedge \Box Y$
\item $\mathcal{M},w\models(\vee_{j=1}^{m}\Diamond\beta_j)\wedge\Box\phi_1\wedge\ldots\wedge\Box\phi_q\wedge \Box Y$, 
 \item $\mathcal{M},w\models(\vee_{k=1}^{n}\Box\gamma_{k})\wedge\Diamond\xi_u\wedge\Box\phi_1\wedge\ldots\wedge\Box\phi_q\wedge \Box Y$ for $1\leq u\leq r$,
 \item $\mathcal{M},w\models((\vee_{i=1}^{l}\alpha_i)\vee(\vee_{j=1}^{m}\Diamond\beta_j))\wedge\Box\phi_1\wedge\ldots\wedge\Box\phi_q\wedge \Box Y$,
 \item $\mathcal{M},w\models((\vee_{i=1}^{l}\alpha_i)\vee(\vee_{k=1}^{n}\Box\gamma_{k}))\wedge\Diamond\xi_u\wedge\Box\phi_1\wedge\ldots\wedge\Box\phi_q\wedge \Box Y$ for $1\leq u\leq r$,
 \item $\mathcal{M},w\models((\vee_{j=1}^{m}\Diamond\beta_j)\vee(\vee_{k=1}^{n}\Box\gamma_{k}))\wedge\Box\phi_1\wedge\ldots\wedge\Box\phi_q\wedge \Box Y$, and
\item $\mathcal{M},w\models((\vee_{i=1}^{l}\alpha_i)\vee(\vee_{j=1}^{m}\Diamond\beta_j)\vee(\vee_{k=1}^{n}\Box\gamma_{k}))\wedge\Box\phi_1\wedge\ldots\wedge\Box\phi_q\wedge \Box Y$.
\end{enumerate}
Hence, we get the following from above statements:
\begin{enumerate}
\item $\mathcal{M}, w\models(\vee_{i=1}^{l}\alpha_i)\wedge\psi_1\wedge\ldots\wedge\psi_p$ and $\mathcal{M}, w\models\Box Y$. Let there be a propositional model $w^{'}$ of $(\vee_{i=1}^{l}\alpha_i)\wedge\psi_1\wedge\ldots\wedge\psi_p$ so that $\mathcal{M}, w^{'}\models(\vee_{i=1}^{l}\alpha_i)\wedge\psi_1\wedge\ldots\wedge\psi_p$ and $\mathcal{M}, w\models\Box Y$. Construct a new Kripke model $\mathcal{M}^{'}$ which contains the model $\mathcal{M}$ and the world $w^{'}$ and $Rww^{'}$, so we get $\mathcal{M}^{'}, w^{'}\models(\vee_{i=1}^{l}\alpha_i)\wedge\psi_1\wedge\ldots\wedge\psi_p$ and $\mathcal{M}^{'}, w^{'}\models Y$. So $\mathcal{M}^{'}, w^{'}\models(\vee_{i=1}^{l}\alpha_i)\wedge\psi_1\wedge\ldots\wedge\psi_p\wedge Y$. $(\vee_{i=1}^{l}\alpha_i)\wedge\psi_1\wedge\ldots\wedge\psi_p\wedge Y\not\models\bot$. Hence, $(\vee_{i=1}^{l}\alpha_i)\wedge\psi_1\wedge\ldots\wedge\psi_p\not\models_{Y}\bot$.
\item $(\vee_{j=1}^{m}\beta_j)\wedge\phi_1\wedge\ldots\wedge\phi_q\wedge Y$ is satisfiable because there exists $w^{'}$ such that $Rww^{'}$ and $\mathcal{M}, w^{'}\models(\vee_{j=1}^{m}\beta_j)\wedge\phi_1\wedge\ldots\wedge\phi_q\wedge Y$. So, $(\vee_{j=1}^{m}\beta_j)\wedge\phi_1\wedge\ldots\wedge\phi_q\wedge Y\not\models\bot$. This implies $(\vee_{j=1}^{m}\beta_j)\wedge\phi_1\wedge\ldots\wedge\phi_q\not\models_{Y}\bot$.
\item Similarly, we can show $(\vee_{k=1}^{n}\gamma_{k})\wedge\xi_u\wedge\phi_1\wedge\ldots\wedge\phi_q \not\models_{Y}\bot$ for $1\leq u\leq r$ like (2).
\item As $\mathcal{M},w\models(\vee_{i=1}^{l}\alpha_i)\wedge\Box\phi_1\wedge\ldots\wedge\Box\phi_q\wedge \Box Y$ and $\mathcal{M},w\models(\vee_{j=1}^{m}\Diamond\beta_j)\wedge\Box\phi_1\wedge\ldots\wedge\Box\phi_q\wedge \Box Y$, so $(\mathcal{M},w\models(\vee_{i=1}^{l}\alpha_i)\wedge\mathcal{M},w\models\Box\phi_1\wedge\ldots\wedge\Box\phi_q\wedge \Box Y)\vee\mathcal{M},w\models(\vee_{j=1}^{m}\Diamond\beta_j)\wedge\Box\phi_1\wedge\ldots\wedge\Box\phi_q\wedge \Box Y$. Let there be a propositional model $w^{'}$ of $(\vee_{i=1}^{l}\alpha_i)$ such that $(\mathcal{M},w^{'}\models(\vee_{i=1}^{l}\alpha_i)\wedge\mathcal{M},w\models\Box\phi_1\wedge\ldots\wedge\Box\phi_q\wedge \Box Y)\vee\mathcal{M},w\models(\vee_{j=1}^{m}\Diamond\beta_j)\wedge\Box\phi_1\wedge\ldots\wedge\Box\phi_q\wedge \Box Y$. Construct a new Kripke model $\mathcal{M}^{'}$ which contains the model $\mathcal{M}$ and the world $w^{'}$ and $Rww^{'}$, so we get $(\mathcal{M}^{'},w^{'}\models(\vee_{i=1}^{l}\alpha_i)\wedge\mathcal{M}^{'},w^{'}\models\phi_1\wedge\ldots\wedge\phi_q\wedge Y)\vee\mathcal{M}^{'},w^{'}\models(\vee_{j=1}^{m}\beta_j)\wedge\phi_1\wedge\ldots\wedge\phi_q\wedge Y$. So, $\mathcal{M}^{'},w^{'}\models(\vee_{i=1}^{l}\alpha_i)\wedge\phi_1\wedge\ldots\wedge\phi_q\wedge Y$ or $\mathcal{M}^{'},w^{'}\models(\vee_{j=1}^{m}\beta_j)\wedge\phi_1\wedge\ldots\wedge\phi_q\wedge Y$. By distributivity, $\mathcal{M}^{'},w^{'}\models((\vee_{i=1}^{l}\alpha_i)\vee(\vee_{j=1}^{m}\beta_j))\wedge\phi_1\wedge\ldots\wedge\phi_q\wedge Y$. $((\vee_{i=1}^{l}\alpha_i)\vee(\vee_{j=1}^{m}\beta_j))\wedge\phi_1\wedge\ldots\wedge\phi_q\wedge Y\not\models\bot$. This implies, $((\vee_{i=1}^{l}\alpha_i)\vee(\vee_{j=1}^{m}\beta_j))\wedge\phi_1\wedge\ldots\wedge\phi_q\not\models_{Y}\bot$. 
\item Similarly we can show $((\vee_{i=1}^{l}\alpha_i)\vee(\vee_{k=1}^{n}\gamma_{k}))\wedge\xi_u\wedge\phi_1\wedge\ldots\wedge\phi_q\not\models_{Y}\bot$ for $1\leq u\leq r$ like (4). 

\item Similarly we can show $((\vee_{j=1}^{m}\beta_j)\vee(\vee_{k=1}^{n}\gamma_{k}))\wedge\phi_1\wedge\ldots\wedge\phi_q\not\models_{Y}\bot$ like (2) or (3). 
\item Similarly we can show $((\vee_{i=1}^{l}\alpha_i)\vee(\vee_{j=1}^{m}\beta_j)\vee(\vee_{k=1}^{n}\gamma_{k}))\wedge\phi_1\wedge\ldots\wedge\phi_q\not\models_{Y}\bot$ like (4). \hfill{$\Box$}
\end{enumerate}
\end{proof}

\section{Theory Prime Implicates}

 Now we give the definitions of prime implicates and prime implicants of a knowledge base $X$ with respect to $\models$ in modal logic.

\begin{definition}
A clause $C$ is said to be an implicate of a formula $X$ if $X\models C$. A clause $C$ is a prime implicate of $X$ if $C$ is an implicate of $X$ and there is no other implicate $C^{'}$ of $X$ such that $C^{'}\models C$. The set of prime implicates of $X$ is denoted by  $\Pi(X)$. 
\end{definition}

\begin{definition}
A term $C$ is said to be an implicant of a formula $X$ if $C\models X$. A term $C$ is said to be a prime implicant of $X$ if $C$ is an implicant of $X$ and and there is no other implicant $C^{'}$ of $X$ such that $C\models C^{'}$.
\end{definition}

\begin{definition}
A clause $C^{'}\in X$ is a minimal element of $X$ if for all $C\in X$, $C\models C^{'}$ implies $C\equiv C^{'}$. Similarly, a clause $C^{'}\in X$ is a minimal element of $X$ with respect to a propositional formula $Y$ if for all $C\in X$, $C\models_{Y} C^{'}$ implies $C\equiv_{Y} C^{'}$.
\end{definition}

So we note that prime implicates (or prime implicants) of a knowledge base $X$ are minimal elements with respect to $\models$ among the implicates (or implicants) of $X$ respectively.

We now extend the definition of prime implicate to theory prime implicate with respect to $\models_{Y}$ as follows.

\begin{definition}
Let $X$ and $Y$ be any modal formulas. A clause $C$ is a theory implicate of $X$ with respect to $Y$ iff $X\models_{Y}C$. A clause $C$ is a theory prime implicate of $X$ with respect to $Y$ iff $C$ is a theory implicate of $X$ with respect to $Y$ and there is no theory implicate $C^{'}$ of $X$ with respect to $Y$ such that $C^{'}\models _{Y}C$. We denote $\Theta(X,Y)$ as the set of theory prime implicates of $X$ with respect to $Y$.
\end{definition}

We note that the set of theory prime implicates of $X$ with respect to $Y$,i.e,  $\Theta(X,Y)$, is the minimal elements with respect to $\models_{Y}$ among the set of theory implicates of $X$ with respect to $Y$.

In the rest of the paper we compute the theory prime implicates of $X$ with respect to $\Box Y$ where $Y$ is a propositional formula using the above definitions and results. 
We have been able to compute theory prime implicates of $X$ with respect to a restricted modal knowledge base $\Box Y$, instead of an arbitrary modal formula $Z$. 

\subsection{Properties of Theory Prime Implicates}

Below we list some of the properties of theory prime implicates. 

\begin{lemma}\label{tpi_pi_subset}
Let $X$ be a modal formula and $Y$ be any propositional formula. Then $\Theta(X,\Box Y)\subseteq\Pi(X\cup\Box Y)$.
\end{lemma}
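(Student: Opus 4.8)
The plan is to show the two defining conditions of membership in $\Pi(X\cup\Box Y)$ hold for every clause $C\in\Theta(X,\Box Y)$: that $C$ is an implicate of $X\cup\Box Y$, and that $C$ is minimal among such implicates with respect to $\models$. First I would unwind the definitions: if $C\in\Theta(X,\Box Y)$ then $X\models_{\Box Y}C$, which by the definition of $\models_{\Box Y}$ means exactly $X\cup\Box Y\models C$; so $C$ is an implicate of $X\cup\Box Y$, giving the first condition immediately. This half is essentially a notational observation.

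For the second condition I would argue by contradiction. Suppose $C\in\Theta(X,\Box Y)$ but $C\notin\Pi(X\cup\Box Y)$; since $C$ is already an implicate of $X\cup\Box Y$, failure of primality means there is another implicate $C'$ of $X\cup\Box Y$ with $C'\models C$ and $C'\not\equiv C$ (equivalently $C\not\models C'$). From $C'\models C$ we get $C'\models_{\Box Y}C$ by weakening (strengthening the premise with $\Box Y$), and from $X\cup\Box Y\models C'$ we get $X\models_{\Box Y}C'$, i.e. $C'$ is a theory implicate of $X$ with respect to $\Box Y$. Then $C'$ is a theory implicate with $C'\models_{\Box Y}C$, which contradicts $C$ being a \emph{theory prime} implicate of $X$ with respect to $\Box Y$ — unless $C\equiv_{\Box Y}C'$. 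So the residual task is to rule out the case $C\equiv_{\Box Y}C'$ together with $C\not\equiv C'$; here the gap between $\equiv$ and $\equiv_{\Box Y}$ is the delicate point, and I would handle it by observing that in $\Pi(X\cup\Box Y)$ one identifies clauses up to equivalence anyway, or more carefully by noting that $C'\models C$ already holds outright, so the only way $C$ fails primality is if $C\not\models C'$, and then $C'\models_{\Box Y}C$ but $C\not\models_{\Box Y}C'$ would follow provided $\Box Y$ does not "collapse" the two clauses — which is exactly where I expect the main obstacle to lie.

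The main obstacle, then, is the mismatch between the equivalence used to define prime implicates ($\equiv$) and the one used for theory prime implicates ($\equiv_{\Box Y}$): a clause could in principle be a theory prime implicate of $X$ w.r.t. $\Box Y$ yet be strictly weaker (under plain $\models$) than some implicate of $X\cup\Box Y$ that is $\Box Y$-equivalent to it. To close this cleanly I would either (i) adopt the convention, implicit in the paper's use of "the set of prime implicates", that $\Pi(\cdot)$ and $\Theta(\cdot,\cdot)$ are taken modulo the relevant equivalence, so that "$C'\models C$ and $C'\not\equiv C$" can be sharpened to "$C\not\models C'$", feeding directly into the contradiction above; or (ii) invoke Lemma~\ref{symmetry_entailment}-style reasoning to transfer the strict ordering. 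Modulo this bookkeeping, the argument is short, and I would present it as the contradiction sketched above: $C'$ witnesses both that $C$ is an implicate of $X\cup\Box Y$ and, when primality fails, that $C$ was not a minimal theory implicate, contradicting $C\in\Theta(X,\Box Y)$.
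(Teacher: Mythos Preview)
Your approach is exactly the paper's: unfold $\models_{\Box Y}$ to see that $C$ is an implicate of $X\cup\Box Y$, then suppose $C$ is not prime, obtain a strictly stronger implicate $C'$, note that $C'$ is a theory implicate with $C'\models_{\Box Y}C$, and declare a contradiction with $C\in\Theta(X,\Box Y)$.

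The subtlety you flag---that $C'\models C$ strictly (under plain $\models$) need not yield $C'\models_{\Box Y}C$ strictly, so $C'$ might be $\Box Y$-equivalent to $C$ and hence not witness a failure of theory-primality---is not addressed in the paper's proof at all. The paper simply writes ``there is no theory implicate $C'$ \ldots\ such that $C'\models_{\Box Y}C$'' as the defining property and takes the existence of such a $C'$ as an immediate contradiction, without checking $C\not\models_{\Box Y}C'$. In other words, the paper works at a level of rigor where both $\Pi$ and $\Theta$ are implicitly taken modulo the appropriate equivalence, exactly your option (i). At that level your argument is already complete; the extra machinery you contemplate (transferring strict orderings via Lemma~\ref{symmetry_entailment}) is neither used nor needed in the paper's version. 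You are being more careful than the source, not less.
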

\begin{proof}
Let $C\in\Theta(X,\Box Y)$. So $X\models_{\Box Y} C$ and there is no theory implicate $C^{'}$ of $X$ with respect to $\Box Y$ such that $C^{'}\models_{\Box Y} C$. This implies $X\cup \Box Y\models C$. So $C$ is an implicate of $X\cup\Box Y$. If $C$ is not a prime implicate of $X\cup\Box Y$, i.e, $C\not\in\Pi(X\cup\Box Y)$ then there is an implicate $C^{'}$ of $X\cup\Box Y$ such that $C^{'}\models C$, which implies $C^{'}\models_{\Box Y}C$. As $C^{'}$ is an implicate of $X\cup\Box Y$ so $X\cup\Box Y\models C^{'}$, i.e, $X\models_{\Box Y} C^{'}$. So $C^{'}$ is a theory implicate of $X$ with respect to $\Box Y$. So we are getting a theory implicate $C^{'}$ of $X$ with respect to $\Box Y$ such that $C^{'}\models_{\Box Y}C$. This implies $C$ cannot be a theory prime implicate of $X$ with respect to $\Box Y$. Hence a contradiction. So $C\in\Pi(X\cup\Box Y)$. \hfill{$\Box$} 
\end{proof}
\begin{lemma}\label{tpi_pi_notbelongto}
If $C_1, C_2\in\Pi(X\cup\Box Y)$ and $C_1\models_{\Box Y}C_2$ then $C_2\not\in\Theta(X,\Box Y)$. 
\end{lemma}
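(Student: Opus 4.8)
The plan is to argue by contradiction, mirroring the structure of the proof of Lemma~\ref{tpi_pi_subset}. Assume $C_1, C_2 \in \Pi(X\cup\Box Y)$, that $C_1\models_{\Box Y} C_2$, and — contrary to what we want — that $C_2\in\Theta(X,\Box Y)$. The goal is to produce a theory implicate $C'$ of $X$ with respect to $\Box Y$ with $C'\models_{\Box Y} C_2$ but $C'\not\equiv_{\Box Y} C_2$, contradicting the minimality (with respect to $\models_{\Box Y}$) that membership in $\Theta(X,\Box Y)$ demands. The natural candidate for $C'$ is $C_1$ itself.

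First I would check that $C_1$ is a theory implicate of $X$ with respect to $\Box Y$: since $C_1\in\Pi(X\cup\Box Y)$ we have $X\cup\Box Y\models C_1$, i.e.\ $X\models_{\Box Y} C_1$ by the definition of $\models_{\Box Y}$. Next, we are given $C_1\models_{\Box Y} C_2$. So if we can also show $C_2\not\models_{\Box Y} C_1$, then $C_1$ is a theory implicate strictly $\models_{\Box Y}$-below $C_2$, whence $C_2$ violates the defining condition of a theory prime implicate — contradiction, and we are done.

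The main obstacle is therefore exactly the step $C_2\not\models_{\Box Y} C_1$. Here I would use that $C_1, C_2$ are both \emph{prime} implicates of $X\cup\Box Y$. Suppose for contradiction that $C_2\models_{\Box Y} C_1$ as well; combined with the hypothesis $C_1\models_{\Box Y} C_2$ this gives $C_1\equiv_{\Box Y} C_2$, i.e.\ $C_1\wedge\Box Y\equiv C_2\wedge\Box Y$. Since $X\cup\Box Y\models C_1$ and $X\cup\Box Y\models C_2$, the formula $\Box Y$ is implied wherever $X$ is, so unfolding $\models_{\Box Y}$ gives $X\cup\Box Y\models C_2$ already forces $X\cup\Box Y\models (C_1 \wedge C_2)$; but more usefully, from $C_2\wedge\Box Y\models C_1$ we get $C_2\models_{\Box Y} C_1$, hence $X\cup\Box Y\models C_2$ together with $C_2\wedge\Box Y\models C_1$... — the cleaner route is: from $C_2\models_{\Box Y}C_1$ we have $C_2\wedge\Box Y\models C_1$, so $C_2\wedge\Box Y\models C_1\wedge\Box Y$ and symmetrically, giving $C_2\wedge\Box Y\equiv C_1\wedge\Box Y$; now I claim this forces $C_1\equiv C_2$ as plain clauses once we also invoke that each is a prime implicate, because any proper strengthening among $\models_{\Box Y}$-equivalent clauses would contradict primality of one of them relative to $X\cup\Box Y$ (an implicate $\models$-below a prime implicate must be $\equiv$ to it). I would need to be careful that $\models_{\Box Y}$-equivalence of two implicates of $X\cup\Box Y$ actually upgrades to $\models$-equivalence; this is plausible because $X\cup\Box Y$ itself carries $\Box Y$, so $C_i$ and $C_i\wedge\Box Y$ behave the same against the theory — but it is precisely the place where the argument must be made rigorous rather than waved through. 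Once $C_1\equiv C_2$ is established, the hypothesis "$C_1\models_{\Box Y}C_2$" is consistent with it, so no contradiction arises from that alone; hence the genuine content of the lemma is that when $C_1$ and $C_2$ are \emph{distinct} members of $\Pi(X\cup\Box Y)$ with $C_1\models_{\Box Y}C_2$, distinctness plus primality blocks $C_2\models_{\Box Y}C_1$, and then the contradiction-by-minimality of the first paragraph closes the proof.
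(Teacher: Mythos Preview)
Your setup and second paragraph are exactly the paper's argument: assume $C_2\in\Theta(X,\Box Y)$; from $C_1\in\Pi(X\cup\Box Y)$ obtain $X\cup\Box Y\models C_1$, i.e.\ $X\models_{\Box Y}C_1$, so $C_1$ is a theory implicate of $X$ with respect to $\Box Y$; together with the hypothesis $C_1\models_{\Box Y}C_2$ this contradicts the defining condition of $\Theta(X,\Box Y)$. The paper's proof stops right there and never attempts to establish $C_2\not\models_{\Box Y}C_1$: it simply invokes its definition of theory prime implicate literally (``there is no theory implicate $C'$ such that $C'\models_{\Box Y}C$'', with $C_1$ tacitly taken as an \emph{other} such $C'$), so the mere existence of $C_1$ already yields the contradiction.

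Your third paragraph, where you try to rule out $C_2\models_{\Box Y}C_1$ by upgrading $\equiv_{\Box Y}$ to plain $\equiv$ via primality, has a genuine gap that cannot be closed from the stated hypotheses. Two distinct prime implicates of $X\cup\Box Y$ can be $\equiv_{\Box Y}$-equivalent without being $\equiv$-equivalent: a propositional analogue is $X=p$, $Y=(p\leftrightarrow q)$, where $p$ and $q$ are distinct unit prime implicates of $X\wedge Y$ that are $Y$-equivalent but not equivalent. Primality only tells you that no implicate of $X\cup\Box Y$ lies strictly $\models$-below each of them, and that is perfectly compatible with their being $\models$-incomparable yet $\models_{\Box Y}$-equivalent. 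So the step you yourself flag as needing to ``be made rigorous rather than waved through'' in fact does not go through; the fix is not to supply the missing argument but to drop the strictness requirement, as the paper does.
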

\begin{proof}
Let $C_2\in\Theta(X,\Box Y)$. This implies $X\models_{\Box Y}C_2$ and there is no theory implicate $C$ of $X$ with respect to $\Box Y$ such that $C\models_{\Box Y}C_2$. Hence, $X\cup\Box Y\models C_2$ and there is no implicate $C$ of $X\cup\Box Y$ such that $C\models_{\Box Y}C_2$. But given that $C_1\in\Pi(X\cup\Box Y)$ and $C_1\models_{\Box Y}C_2$. So there is an implicate $C_1$ of $X\cup\Box Y$
 such that $C_1\models_{\Box Y}C_2$. This implies $C_2$ is not prime which is a contradiction. So, $C_2\not\in\Theta(X,\Box Y)$. \hfill{$\Box$}
\end{proof} 
So we conclude from Lemma \ref{tpi_pi_subset} and Lemma \ref{tpi_pi_notbelongto} that the set of theory prime implicates of $X$ with respect to $\Box Y$ can be defined from the set of prime implicates of $X\cup\Box Y$ as follows:

\begin{theorem}\label{minimal_wrt_pi}
$\Theta(X,\Box Y)=\mbox{min}(\Pi(X\cup\Box Y),\models_{\Box Y})$
\end{theorem}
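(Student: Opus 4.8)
The plan is to prove the set equality $\Theta(X,\Box Y)=\mbox{min}(\Pi(X\cup\Box Y),\models_{\Box Y})$ by establishing the two inclusions separately, using Lemma~\ref{tpi_pi_subset} and Lemma~\ref{tpi_pi_notbelongto} as the workhorses. Throughout I would keep in mind the two unfolded characterizations: a clause lies in $\mbox{min}(\Pi(X\cup\Box Y),\models_{\Box Y})$ iff it is a prime implicate of $X\cup\Box Y$ and no prime implicate of $X\cup\Box Y$ is strictly $\models_{\Box Y}$-stronger than it; and a clause lies in $\Theta(X,\Box Y)$ iff it is a theory implicate of $X$ with respect to $\Box Y$ that is $\models_{\Box Y}$-minimal among all such theory implicates.

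For the inclusion $\Theta(X,\Box Y)\subseteq\mbox{min}(\Pi(X\cup\Box Y),\models_{\Box Y})$ I would take $C\in\Theta(X,\Box Y)$. By Lemma~\ref{tpi_pi_subset} we already have $C\in\Pi(X\cup\Box Y)$, so it remains only to check that $C$ is $\models_{\Box Y}$-minimal inside $\Pi(X\cup\Box Y)$. Suppose some $C'\in\Pi(X\cup\Box Y)$ satisfies $C'\models_{\Box Y}C$; I want $C'\equiv_{\Box Y}C$. If this failed, $C'$ would be strictly $\models_{\Box Y}$-stronger than $C$, and then Lemma~\ref{tpi_pi_notbelongto} (applied with $C_1=C'$ and $C_2=C$) forces $C\notin\Theta(X,\Box Y)$, contradicting the choice of $C$. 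Hence $C'\equiv_{\Box Y}C$ and $C$ is minimal, as required.

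For the converse inclusion $\mbox{min}(\Pi(X\cup\Box Y),\models_{\Box Y})\subseteq\Theta(X,\Box Y)$ I would take $C\in\mbox{min}(\Pi(X\cup\Box Y),\models_{\Box Y})$. Since $C\in\Pi(X\cup\Box Y)$ we have $X\cup\Box Y\models C$, i.e. $X\models_{\Box Y}C$, so $C$ is already a theory implicate of $X$ with respect to $\Box Y$; what must be shown is that $C$ is $\models_{\Box Y}$-minimal among theory implicates. Suppose not: let $D$ be a theory implicate of $X$ with respect to $\Box Y$ with $D\models_{\Box Y}C$ but $C\not\models_{\Box Y}D$. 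Then $X\cup\Box Y\models D$, so $D$ is an ordinary implicate of $X\cup\Box Y$; using the fact (from~\cite{Bienvenu}) that every implicate of a formula is $\models$-entailed by one of its prime implicates, pick $D'\in\Pi(X\cup\Box Y)$ with $D'\models D$. Then $D'\models_{\Box Y}C$, so minimality of $C$ inside $\Pi(X\cup\Box Y)$ gives $D'\equiv_{\Box Y}C$, whence $C\models_{\Box Y}D'\models D$, i.e. $C\models_{\Box Y}D$, contradicting $C\not\models_{\Box Y}D$. Therefore $C\in\Theta(X,\Box Y)$, and the two inclusions yield the equality.

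The step I expect to be the main obstacle is in the converse inclusion: the passage from the theory implicate $D$ to a prime implicate $D'\in\Pi(X\cup\Box Y)$ lying $\models$-below it. This relies on the \textbf{completeness} of the prime-implicate operator in the modal system at hand --- that the set of prime implicates $\models$-generates the set of all implicates --- which is precisely what \cite{Bienvenu} supplies and which is the reason $\Pi(X\cup\Box Y)$ is the right object to minimize over. A secondary, purely bookkeeping point that I would make explicit at the start of the proof is that both ``prime'' notions must be read modulo the relevant equivalence ($\equiv$ for $\Pi$, $\equiv_{\Box Y}$ for $\Theta$), so that the appeals to Lemma~\ref{tpi_pi_notbelongto} and to $\models_{\Box Y}$-minimality are unambiguous.
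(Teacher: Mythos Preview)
Your proof is correct and follows the paper's intended route: the paper simply asserts that the theorem follows from Lemma~\ref{tpi_pi_subset} and Lemma~\ref{tpi_pi_notbelongto} without spelling out either inclusion, so your argument is a faithful (and more careful) expansion of that sketch. In particular, your observation that the reverse inclusion $\min(\Pi(X\cup\Box Y),\models_{\Box Y})\subseteq\Theta(X,\Box Y)$ requires the completeness of prime implicates from~\cite{Bienvenu}---to pass from an arbitrary theory implicate $D$ down to some $D'\in\Pi(X\cup\Box Y)$---is exactly the ingredient the paper leaves implicit; Lemmas~\ref{tpi_pi_subset} and~\ref{tpi_pi_notbelongto} by themselves only give the forward inclusion.
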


The above theorem is used in proving the correctness of computation of theory prime implicate algorithm.

 The following theorem says that the set of theory prime implicates of $X$ with respect to $\Box Y$ captures all the theory implicates of $X$ with respect to $\Box Y$. It is useful in proving the correctness of query answering algorithm later.

\begin{theorem}\label{alt_def_of_tpi}
Let $X$ and $Y$ be modal formulas and $C$ be a clause. Then $X\models_{\Box Y}C$ holds if and only if there is a theory prime implicate $C^{'}$ of $X$ with respect to $\Box Y$ such that $C^{'}\models_{\Box Y}C$ holds.
\end{theorem}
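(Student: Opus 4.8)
The plan is to prove Theorem~\ref{alt_def_of_tpi} by establishing both directions of the biconditional, leaning crucially on Theorem~\ref{minimal_wrt_pi} together with the earlier structural facts about prime implicates.

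\textbf{The ``if'' direction.} Suppose there is a theory prime implicate $C'$ of $X$ with respect to $\Box Y$ with $C'\models_{\Box Y}C$. By definition $C'$ is a theory implicate, so $X\models_{\Box Y}C'$, i.e.\ $X\cup\Box Y\models C'$. Since $C'\models_{\Box Y}C$ means $C'\cup\Box Y\models C$, combining these (transitivity of $\models$, using that $X\cup\Box Y$ already entails $\Box Y$) yields $X\cup\Box Y\models C$, that is, $X\models_{\Box Y}C$. This direction is essentially bookkeeping with the definition of $\models_{\Box Y}$.

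\textbf{The ``only if'' direction.} Assume $X\models_{\Box Y}C$, so $X\cup\Box Y\models C$ and $C$ is an implicate of $X\cup\Box Y$. The key step is to pass to a prime implicate: since $C$ is an implicate of $X\cup\Box Y$, there is a prime implicate $D\in\Pi(X\cup\Box Y)$ with $D\models C$ (this is the standard fact that every implicate is entailed by some prime implicate; if the paper wants it spelled out, it follows because the implicates of $X\cup\Box Y$ entailing $C$ form a nonempty set, ordered by $\models$, in which minimal elements exist and are prime — for modal logic one may need to invoke that $\Pi(X\cup\Box Y)$ is finite up to equivalence, as guaranteed by the algorithm of \cite{Bienvenu}). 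Now apply Theorem~\ref{minimal_wrt_pi}: $\Theta(X,\Box Y)=\mathrm{min}(\Pi(X\cup\Box Y),\models_{\Box Y})$, so there is some $C'\in\Theta(X,\Box Y)$ with $C'\models_{\Box Y}D$. Since $D\models C$ implies $D\models_{\Box Y}C$, transitivity of $\models_{\Box Y}$ gives $C'\models_{\Box Y}C$, and $C'$ is by construction a theory prime implicate of $X$ with respect to $\Box Y$. This completes the argument.

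\textbf{Anticipated obstacle.} The only delicate point is the existence of a prime implicate $D$ below the given implicate $C$, and then of a $\models_{\Box Y}$-minimal element of $\Pi(X\cup\Box Y)$ below $D$ — i.e.\ that the relevant preorders are well-founded enough for minimal elements to exist. In propositional logic this is immediate from finiteness of the clause set over a fixed vocabulary; in the modal setting one must appeal to the fact (from \cite{Bienvenu}, and implicitly used already in Theorem~\ref{minimal_wrt_pi}) that $\Pi(X\cup\Box Y)$ is finite up to logical equivalence, so that $\mathrm{min}(\cdot,\models_{\Box Y})$ is well defined and every element of $\Pi(X\cup\Box Y)$ has a $\models_{\Box Y}$-minimal element below it. Granting that, the proof is a short chain of entailment manipulations using transitivity and the fact that $p\models q$ implies $p\models_{\Box Y}q$.
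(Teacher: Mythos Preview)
Your proof is correct. The ``if'' direction is handled the same way as in the paper (and arguably more cleanly: the paper's line ``as $C'$ is disjunctive so $C'\models C'\cup\Box Y$'' is muddled, whereas you simply note $X\cup\Box Y\models\Box Y$ trivially and chain the entailments).

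For the ``only if'' direction you take a slightly different route than the paper. The paper works directly: it collects all implicates $C^{*}$ of $X\cup\Box Y$ with $C^{*}\models C$, passes to a finite set $A$ of representatives of the $\equiv_{\Box Y}$-classes, and argues that a minimal element of $A$ is a theory prime implicate. You instead factor the argument through Theorem~\ref{minimal_wrt_pi}: first drop from $C$ to a prime implicate $D\in\Pi(X\cup\Box Y)$, then invoke $\Theta(X,\Box Y)=\min(\Pi(X\cup\Box Y),\models_{\Box Y})$ to get a theory prime implicate $C'$ with $C'\models_{\Box Y}D\models_{\Box Y}C$. Your version is more modular and avoids re-proving what Theorem~\ref{minimal_wrt_pi} already packages; the paper's version is more self-contained but essentially redoes part of that theorem's content. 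The finiteness caveat you flag (existence of minimal elements) is exactly the same point the paper sweeps under the phrase ``we can find out a set $A=\{C_1,\ldots,C_n\}$'', so you are not assuming anything the paper does not.
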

\begin{proof}
Suppose $X\models_{\Box Y}C$, i.e, $X\cup\Box Y\models C$. So $C$ is an implicate of $X\cup\Box Y$. If $C$ is not prime then there is an implicate $C^{*}$ of $X\cup\Box Y$ such that $C^{*}\models C$. Let $A^{*}=\{C^{*}~|~ C^{*} \mbox{ is an implicate of } X\cup\Box Y \mbox{ and } C^{*}\models C\}$. We can find out a set $A=\{C_1,\ldots,C_n\}$ such that for each $C^{*}\in A^{*}$ there is a $C_i\in A$ such that $C_i\equiv_{\Box Y} C^{*}$. We have chosen one element per equivalence class. So each element of $A$ is an implicate of $X\cup\Box Y$. Then any minimal element of $A$ is a prime implicate of $X\cup\Box Y$. This means we obtain a theory prime implicate of $X$ with respect to $\Box Y$.

Conversely, there exists a theory prime implicate $C^{'}$ of $X$ with respect to $\Box Y$ such that $C^{'}\models_{\Box Y}C$ holds. This implies $X\cup\Box Y\models C^{'}$ and $C^{'}\cup\Box Y\models C$. As $C^{'}$ is disjunctive so $C^{'}\models C^{'}\cup\Box Y$. Hence $X\cup\Box Y\models C$.  \hfill{$\Box$}
\end{proof}

The following theorem is a metalogical property of prime implicates.

\begin{lemma}\label{plain_equivalence}
Let $X$ and $X^{'}$ be formulae in $\mathcal{T}$. Then $X\equiv X^{'}$ if and only if $\Pi(X)\equiv\Pi(X^{'})$.
\end{lemma}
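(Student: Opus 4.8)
The plan is to treat the two implications separately; the forward direction is essentially immediate, while the converse reduces to the fact that a formula is equivalent to the conjunction of its prime implicates.

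For the forward implication I would use the remark made just after the definition of a minimal element: $\Pi(X)$ consists exactly of the $\models$-minimal clauses among the implicates of $X$, i.e.\ among $\{C : X\models C\}$. If $X\equiv X'$, then for every clause $C$ we have $X\models C$ iff $X'\models C$, so $X$ and $X'$ have the same set of implicates, hence the same $\models$-minimal implicates; thus $X$ and $X'$ have the same prime implicates, and in particular $\Pi(X)\equiv\Pi(X')$, reading each set of clauses as the conjunction of its members.

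For the converse I would first record the trivial fact $X\models\bigwedge_{C\in\Pi(X)}C$, which holds because each prime implicate is an implicate of $X$. Next, using the existence of a CNF noted in Section~2, write $X'$ as $C_1\wedge\cdots\wedge C_k$ with each $C_i$ a clause. Since $X'\models C_i$, Theorem~\ref{alt_def_of_tpi} applied with a valid $\Box Y$ --- so that $\models_{\Box Y}$ collapses to $\models$ and $\Theta(X',\Box Y)=\Pi(X')$, exactly the $Y=\emptyset$ instance mentioned in the introduction --- produces some $C_i'\in\Pi(X')$ with $C_i'\models C_i$. From the hypothesis $\Pi(X)\equiv\Pi(X')$ we then obtain $\bigwedge\Pi(X)\models\bigwedge\Pi(X')\models C_i'\models C_i$ for every $i$, hence $\bigwedge\Pi(X)\models X'$; combining with $X\models\bigwedge\Pi(X)$ gives $X\models X'$. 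Swapping the roles of $X$ and $X'$ gives $X'\models X$, so $X\equiv X'$.

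The one point needing care is the specialisation of Theorem~\ref{alt_def_of_tpi}: one must check that choosing $Y$ to be a propositional tautology makes $\Box Y$ valid, turns $\models_{\Box Y}$ into ordinary $\models$, and identifies theory prime implicates with prime implicates. Equivalently, and avoiding that theorem, the whole converse can be read off the soundness and completeness of the prime-implicate representation established for $\mathcal{K}$ (and therefore $\mathcal{T}$) in \cite{Bienvenu}, which gives $X\equiv\bigwedge\Pi(X)$ directly, so that $X\equiv\bigwedge\Pi(X)\equiv\bigwedge\Pi(X')\equiv X'$.
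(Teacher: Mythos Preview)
Your argument is correct. The paper itself gives no proof at all for this lemma --- its entire proof reads ``It is easy to prove.'' --- so there is no approach to compare against; your write-up simply supplies the details the paper omits, via the natural route $X\equiv\bigwedge\Pi(X)$ (obtained either by specialising Theorem~\ref{alt_def_of_tpi} to a tautological $Y$ or by invoking \cite{Bienvenu}).
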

\begin{proof}
It is easy to prove.
\end{proof}

The following theorem is a metalogical property of theory prime implicates.

\begin{theorem}\label{equivalence}
Suppose $X, X^{'},$ are formulae in $\mathcal{T}$ and $Y, Y^{'}$ be any propositional formulae. If $X\equiv_{\Box Y}X^{'}$ and $Y\equiv Y^{'}$ then $\Theta(X^{'},{\Box Y^{'}})=\Theta(X,\Box Y)$.
\end{theorem}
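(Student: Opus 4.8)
The plan is to reduce everything to two facts: the hypotheses force $X\cup\Box Y$ and $X'\cup\Box Y'$ to be logically equivalent, and the relations $\models_{\Box Y}$ and $\models_{\Box Y'}$ are one and the same. Once these are in hand the theorem follows either through Theorem~\ref{minimal_wrt_pi} and Lemma~\ref{plain_equivalence}, or by a direct two-sided inclusion.

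First I would establish $\Box Y\equiv\Box Y'$, which is immediate from $Y\equiv Y'$ and the implication $(i)\Rightarrow(iii)$ of Lemma~\ref{DiamondBox_equiv}. From $\Box Y\equiv\Box Y'$ it follows that $X_1\cup\Box Y\equiv X_1\cup\Box Y'$ for every modal formula $X_1$, hence $X_1\models_{\Box Y}X_2$ iff $X_1\models_{\Box Y'}X_2$; that is, $\models_{\Box Y}$ and $\models_{\Box Y'}$ coincide, and so do $\equiv_{\Box Y}$ and $\equiv_{\Box Y'}$. Next, unfolding $X\equiv_{\Box Y}X'$ gives $X\cup\Box Y\models X'$ and $X'\cup\Box Y\models X$; since trivially $X\cup\Box Y\models\Box Y$ and $X'\cup\Box Y\models\Box Y$, this yields $X\cup\Box Y\equiv X'\cup\Box Y$, and combining with $\Box Y\equiv\Box Y'$ gives $X\cup\Box Y\equiv X'\cup\Box Y'$.

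Now I would finish in one of two ways. The quick route: by Lemma~\ref{plain_equivalence}, $\Pi(X\cup\Box Y)\equiv\Pi(X'\cup\Box Y')$, and by Theorem~\ref{minimal_wrt_pi}, $\Theta(X,\Box Y)=\mbox{min}(\Pi(X\cup\Box Y),\models_{\Box Y})$ while $\Theta(X',\Box Y')=\mbox{min}(\Pi(X'\cup\Box Y'),\models_{\Box Y'})$; since the two sets of prime implicates agree up to clause equivalence and the two preorders coincide, their sets of minimal elements coincide. The direct route: take $C\in\Theta(X,\Box Y)$. From $X\cup\Box Y\equiv X'\cup\Box Y'$ we get $X'\models_{\Box Y'}C$, so $C$ is a theory implicate of $X'$ with respect to $\Box Y'$; and if some theory implicate $C'$ of $X'$ with respect to $\Box Y'$ satisfies $C'\models_{\Box Y'}C$, then $X'\cup\Box Y'\models C'$ gives $X\cup\Box Y\models C'$ (so $C'$ is a theory implicate of $X$ with respect to $\Box Y$) and $C'\models_{\Box Y'}C$ gives $C'\models_{\Box Y}C$, whence minimality of $C$ in $\Theta(X,\Box Y)$ forces $C'\equiv_{\Box Y}C$, i.e.\ $C'\equiv_{\Box Y'}C$. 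Thus $C\in\Theta(X',\Box Y')$, and the reverse inclusion is symmetric.

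The only delicate point is the interplay between genuine set equality and equality ``up to clause equivalence'': a clause and an equivalent clause are interchangeable throughout because $C\equiv C'$ implies $C\equiv_{\Box Y}C'$, and the paper's convention (as in the proof of Theorem~\ref{alt_def_of_tpi}) is to retain one representative per $\equiv_{\Box Y}$-class. Making this explicit --- equivalently, checking that $\mbox{min}$ is insensitive to replacing members of a set by $\models_{\Box Y}$-equivalent ones --- is the one step needing a word of care; everything else is routine unfolding of the definitions.
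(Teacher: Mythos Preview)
Your proposal is correct, and your ``direct route'' is essentially the paper's own argument: the paper picks $C\in\Theta(X',\Box Y')$, uses $X\equiv_{\Box Y}X'$ to get $X\cup\Box Y\cup\Box Y'\models C$, invokes Lemma~\ref{DiamondBox_equiv} to collapse $\Box Y\cup\Box Y'$ to $\Box Y$, and then asserts minimality; the reverse inclusion is symmetric. Your version is in fact cleaner, since you first isolate the two key reductions ($\Box Y\equiv\Box Y'$ and hence $\models_{\Box Y}\,=\,\models_{\Box Y'}$; and $X\cup\Box Y\equiv X'\cup\Box Y'$) and then run the inclusion, which makes the minimality step transparent --- the paper's handling of minimality is terser and leaves the preorder switch implicit.

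Your ``quick route'' via Theorem~\ref{minimal_wrt_pi} and Lemma~\ref{plain_equivalence} is a genuinely different packaging that the paper does not use; it trades the elementwise chase for an appeal to previously proved structure. Your closing remark about set equality versus equality up to $\equiv_{\Box Y}$ is well taken: the paper tacitly works modulo clause equivalence throughout (cf.\ the proof of Theorem~\ref{alt_def_of_tpi}), and making that convention explicit is exactly the right amount of care.
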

\begin{proof}
Let $C\in\Theta(X^{'},\Box Y^{'})$, i.e, $C$ is a theory prime implicate of $X^{'}$ with respect to $\Box Y^{'}$. So by definition $C$ is a theory implicate of $X^{'}$ with respect to $\Box Y^{'}$ and there is no other theory implicate $C^{'}$ of $X^{'}$ with respect to $\Box Y^{'}$ such that $C^{'}\models_{\Box Y}C$, i.e, $X^{'}\cup\Box Y^{'}\models C$ and there does not exist any $C^{'}$ such that $X^{'}\cup\Box Y^{'}\models C^{'}$ and $C^{'}\models_{\Box Y}C$. As $X\equiv_{\Box Y}X^{'}$ so $X\cup\Box Y\models X^{'}$. Hence $X\cup\Box Y\cup\Box Y^{'}\models X^{'}\cup\Box Y^{'}\models C$ so $X\cup\Box Y\cup\Box Y^{'}\models C$. As $Y\equiv Y^{'}$ so by Lemma \ref{DiamondBox_equiv}, we have $\Box Y\equiv\Box Y^{'}$. Hence $X\cup\Box Y\models C$. Hence, $X\cup\Box Y\models C$ and there does not exist any $C^{'}$ such that $X\cup\Box Y\models C^{'}$ and $C^{'}\models_{\Box Y}C$. This implies $C$ is a theory prime implicate of $X$ with respect to $\Box Y$, i.e, $C\in\Theta(X,\Box Y)$. This implies, $\Theta(X^{'},{\Box Y^{'}})\subseteq\Theta(X,\Box Y)$. Similarly we can prove $\Theta(X,\Box Y)\subseteq\Theta(X^{'},{\Box Y^{'}})$. Hence proved. \hfill{$\Box$}
\end{proof}

The equivalence preserving knowledge compilation Theorem \ref{tpi_compilation} will be proved by the help of the following lemma.

\begin{lemma}\label{equivalence_wrt_boxy}
Let $X$ be a modal formula and $Y$ be any propositional formula. Then $X\equiv_{\Box Y}\Theta(X,\Box Y)$. 
\end{lemma}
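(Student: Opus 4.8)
By definition of $\equiv_{\Box Y}$ the statement splits into the two entailments $X\models_{\Box Y}\Theta(X,\Box Y)$ and $\Theta(X,\Box Y)\models_{\Box Y}X$, and I would treat the finite set $\Theta(X,\Box Y)$ as the conjunction of its members throughout.

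The first entailment is immediate. Every $C\in\Theta(X,\Box Y)$ is in particular a theory implicate of $X$ with respect to $\Box Y$, so $X\models_{\Box Y}C$, i.e.\ $X\cup\Box Y\models C$. Taking the conjunction over all such $C$ gives $X\cup\Box Y\models\Theta(X,\Box Y)$, that is, $X\models_{\Box Y}\Theta(X,\Box Y)$.

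For the converse entailment I would argue through the prime implicates. Recall the standard normal-form fact (from \cite{Bienvenu}) that a formula is equivalent to the conjunction of its prime implicates, so $X\cup\Box Y\equiv\Pi(X\cup\Box Y)$; hence it suffices to prove $\Theta(X,\Box Y)\cup\Box Y\models C$ for each clause $C\in\Pi(X\cup\Box Y)$. Fix such a $C$. Since $C$ is an implicate of $X\cup\Box Y$ we have $X\cup\Box Y\models C$, i.e.\ $X\models_{\Box Y}C$, so $C$ is a theory implicate of $X$ with respect to $\Box Y$. By Theorem~\ref{alt_def_of_tpi} there is a theory prime implicate $C'$ of $X$ with respect to $\Box Y$ with $C'\models_{\Box Y}C$, and $C'\in\Theta(X,\Box Y)$. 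Because $C'$ is one of the conjuncts of $\Theta(X,\Box Y)$ we get $\Theta(X,\Box Y)\cup\Box Y\models C'$, and $C'\models_{\Box Y}C$ means $C'\cup\Box Y\models C$; chaining these yields $\Theta(X,\Box Y)\cup\Box Y\models C$. Conjoining over all $C\in\Pi(X\cup\Box Y)$ gives $\Theta(X,\Box Y)\cup\Box Y\models\Pi(X\cup\Box Y)\equiv X\cup\Box Y\models X$, hence $\Theta(X,\Box Y)\models_{\Box Y}X$, completing the proof.

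The only delicate point is the appeal to the normal-form fact $X\cup\Box Y\equiv\Pi(X\cup\Box Y)$ in the modal setting; this is exactly the completeness property of the prime-implicate compilation of \cite{Bienvenu}, and one could equivalently work with any CNF $\bigwedge_i C_i$ of $X$ (so that each $C_i$ is an implicate of $X$, hence of $X\cup\Box Y$) and apply Theorem~\ref{alt_def_of_tpi} to each $C_i$, which gives the same conclusion without mentioning $\Pi$. Everything else is a routine unfolding of the definitions of $\models_{\Box Y}$ and of theory (prime) implicates.
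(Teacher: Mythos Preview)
Your proof is correct and follows essentially the same route as the paper: both directions rest on the normal-form equivalence $X\cup\Box Y\equiv\Pi(X\cup\Box Y)$, and the harder (converse) direction amounts to showing that each $C\in\Pi(X\cup\Box Y)$ is $\Box Y$-entailed by some member of $\Theta(X,\Box Y)$. The only cosmetic difference is that you obtain this covering via Theorem~\ref{alt_def_of_tpi}, whereas the paper phrases the same step through Theorem~\ref{minimal_wrt_pi}, writing $\Theta(X,\Box Y)=\min(\Pi(X\cup\Box Y),\models_{\Box Y})$ and observing that the minimal subset together with $\Box Y$ recovers all of $\Pi(X\cup\Box Y)$; your forward direction is also slightly more direct (using that every element of $\Theta$ is by definition a theory implicate) than the paper's, which detours through $\Pi$.
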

\begin{proof}
First we have to prove $X\models_{\Box Y}\Theta(X,\Box Y)$, i.e, to prove  $X\models_{\Box Y}\mbox{ min}(\Pi(X\cup\Box Y),\models_{\Box Y})$ (by Theorem \ref{minimal_wrt_pi}), i.e, to prove $X\cup \Box Y\models\mbox{ min}(\Pi(X\cup\Box Y),\models_{\Box Y})$.  

As $\mbox{nb\_cl}(\Pi(X\cup\Box Y))\geq \mbox{nb\_cl}(\mbox{min}(\Pi(X\cup\Box Y),\models_{\Box Y}))$ so  $\Pi(X\cup\Box Y)\models\mbox{ min}(\Pi(X\cup\Box Y),\models_{\Box Y})$. As $X\cup\Box Y\equiv\Pi(X\cup\Box Y)$, so $X\cup\Box Y\models\mbox{ min}(\Pi(X\cup\Box Y),\models_{\Box Y})$. 

Conversely, we have to prove $\Theta(X,Y)\models_{\Box Y}X$, i.e, to prove $\mbox{ min}(\Pi(X\cup\Box Y),\models_{\Box Y})\models_{\Box Y}X$ (by Theorem \ref{minimal_wrt_pi}), i.e, to prove $\mbox{ min}(\Pi(X\cup\Box Y),\models_{\Box Y})\cup\Box Y\models X$. As $\Pi(X\cup \Box Y)\cup \Box Y\models \mbox{ min}(\Pi(X\cup\Box Y),\models_{\Box Y})\cup\Box Y$, so we have to prove  $\Pi(X\cup \Box Y)\cup \Box Y\models X$. As $\Pi(X\cup \Box Y)\equiv X\cup \Box Y$ so we have to prove $(X\cup \Box Y)\cup \Box Y\models X$, i.e, to prove $X\cup \Box Y\models X$ which holds always. Hence proved. \hfill{$\Box$}
\end{proof}

The following result which holds in $\mathcal{T}$ shows that weakening the consequence $Y$ does not increases the number of clauses of $\Theta(X,\Box Y)$. 

\begin{theorem}\label{weaken_the_consequence}
Let $X$ be a modal formula and $Y, Y^{'}$ be any propositional formulas such that $X\models Y$ and $Y\models Y^{'}$. For every $\pi^{'}\in\Theta(X,\Box Y^{'})$ there exists a $\pi\in\Theta(X,\Box Y)$ such that $\pi^{'}\equiv_{\Box Y^{'}}\pi$. Consequently, $\mbox{nb\_cl}(\Theta(X,\Box Y^{'}))\leq\mbox{nb\_cl}(\Theta(X,\Box Y))$.
\end{theorem}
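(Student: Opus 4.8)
The plan is to build an injection $\Phi\colon\Theta(X,\Box Y')\to\Theta(X,\Box Y)$ with $\Phi(\pi')\equiv_{\Box Y'}\pi'$ for every $\pi'$; since the members of a $\Theta(\cdot,\cdot)$ set are pairwise non-$\equiv$ (one clause per equivalence class is retained, as in the proof of Theorem~\ref{alt_def_of_tpi}), injectivity of $\Phi$ immediately yields $\mbox{nb\_cl}(\Theta(X,\Box Y'))\le\mbox{nb\_cl}(\Theta(X,\Box Y))$. First I would clear the bookkeeping. Because we are in $\mathcal{T}$, reflexivity gives $\Box Z\models Z$ for every $Z$; combined with $X\models Y$ and $Y\models Y'$ this yields $X\models\Box Y$, $X\models\Box Y'$ and $\Box Y\models\Box Y'$. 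Hence $X\equiv X\cup\Box Y\equiv X\cup\Box Y'$, so $\Pi(X\cup\Box Y)=\Pi(X\cup\Box Y')=\Pi(X)$ by Lemma~\ref{plain_equivalence}, and Theorem~\ref{minimal_wrt_pi} identifies $\Theta(X,\Box Y)$ with $\mbox{min}(\Pi(X),\models_{\Box Y})$ and $\Theta(X,\Box Y')$ with $\mbox{min}(\Pi(X),\models_{\Box Y'})$. From $\Box Y\models\Box Y'$ I would also record that $A\models_{\Box Y'}B$ implies $A\models_{\Box Y}B$.

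Next I would define $\Phi$. Fix $\pi'\in\Theta(X,\Box Y')$. Then $X\cup\Box Y\models X\cup\Box Y'\models\pi'$, so $\pi'$ is a theory implicate of $X$ with respect to $\Box Y$, and Theorem~\ref{alt_def_of_tpi} supplies some $\pi\in\Theta(X,\Box Y)$ with $\pi\models_{\Box Y}\pi'$; put $\Phi(\pi')=\pi$. Since $\pi\in\Pi(X)$ we have $X\models\pi$, hence $X\cup\Box Y'\models\pi$, so $\pi$ is also a theory implicate of $X$ with respect to $\Box Y'$; as $\pi'$ is $\models_{\Box Y'}$-minimal among those, it would be enough to prove $\pi\models_{\Box Y'}\pi'$, for then minimality of $\pi'$ forces $\pi\equiv_{\Box Y'}\pi'$ and the construction is complete.

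The hard part will be exactly this last step, because in general $\pi\models_{\Box Y}\pi'$ does not give $\pi\models_{\Box Y'}\pi'$ ($\models_{\Box Y}$ is the coarser relation), so one must descend to propositional reasoning and exploit the minimality of $\pi'$. Concretely I would decompose, via the grammar of Definition~\ref{LCTF}, the clause $\pi$ as a propositional clause disjoined with literals $\Diamond\beta_j,\Box\gamma_k$, and the term $\neg\pi'$ as a propositional term conjoined with literals $\Box\delta_s,\Diamond\epsilon_t$; by the propositional reading of Lemma~\ref{symmetry_propositional}, $\pi\models_{\Box Y}\pi'$ amounts to $\pi\wedge\neg\pi'\models_{\Box Y}\bot$, which, distributing $\pi$, is the conjunction over literals $L$ of $\pi$ of the statements $L\wedge\neg\pi'\models_{\Box Y}\bot$; each of these has the shape handled by Theorem~\ref{several_possibilities_wrt_boxY} (with Lemma~\ref{symmetry_entailment} doing the $\Box/\Diamond$ bookkeeping), so it is equivalent to a disjunction of purely propositional conditions $\theta\models_Y\bot$. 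Running the same reduction with $\Box Y'$ produces the same conditions with $\models_{Y'}$, so the task reduces to upgrading each relevant $\theta\models_Y\bot$ to $\theta\models_{Y'}\bot$; this is where $\pi'\in\Theta(X,\Box Y')$ enters, forcing the propositional ingredients of $\pi'$ to already be $\models_{Y'}$-minimal so that the corresponding propositional statement of \cite{Marquis} applies. Once $\pi\models_{\Box Y'}\pi'$ is in hand, $\pi\equiv_{\Box Y'}\pi'$ follows, $\Phi$ is well defined and, by pairwise non-equivalence of the members of $\Theta(X,\Box Y')$, injective, and the inequality on $\mbox{nb\_cl}$ drops out. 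The crux, and the place I would expect to spend the most care, is matching the modal structure of $\pi$ and $\pi'$ against the seven cases of Theorem~\ref{several_possibilities_wrt_boxY} and verifying that $\models_{\Box Y'}$-minimality of $\pi'$ genuinely propagates to its propositional parts.
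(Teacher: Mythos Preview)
Your route diverges from the paper's at the crucial step, and the divergence creates a real gap. You first pick $\pi\in\Theta(X,\Box Y)$ via Theorem~\ref{alt_def_of_tpi}, obtaining only $\pi\models_{\Box Y}\pi'$, and then try to upgrade this to $\pi\models_{\Box Y'}\pi'$. As you yourself note, $\models_{\Box Y'}$ is the \emph{finer} relation (since $\Box Y\models\Box Y'$), so this upgrade goes the wrong way. Your proposed fix---decompose $\pi\wedge\neg\pi'\models_{\Box Y}\bot$ with Theorem~\ref{several_possibilities_wrt_boxY} into propositional conditions $\theta\models_{Y}\bot$ and then upgrade each to $\theta\models_{Y'}\bot$---faces exactly the same obstacle one level down. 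Your appeal to ``the corresponding propositional statement of \cite{Marquis}'' does not close it: Marquis's statement needs the propositional objects to form a theory-implicate/theory-prime-implicate pair with respect to $Y'$, and you have not shown (nor is it clear how to show) that the propositional constituents of $\pi'$ inherit $\models_{Y'}$-minimality from $\pi'$ itself. The sentence ``forcing the propositional ingredients of $\pi'$ to already be $\models_{Y'}$-minimal'' is precisely the missing lemma, and nothing in the paper supplies it.

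The paper avoids this entirely by never introducing $\models_{\Box Y}$ until the end. It applies Theorem~\ref{several_possibilities_wrt_boxY} to $X\wedge\neg\pi'\models_{\Box Y'}\bot$ directly, decomposing the CNF structure of $X$ (not of some $\pi$ chosen after the fact). In each of the seven cases this produces a clause $D$ of $X$ with $D\models_{\Box Y'}\pi'$, so that $D\equiv_{\Box Y'}\pi'$ follows immediately from the $\models_{\Box Y'}$-minimality of $\pi'$; only then does the paper argue that $D\in\Theta(X,\Box Y)$. In short, the paper picks $\pi$ so that $\pi\models_{\Box Y'}\pi'$ holds \emph{by construction}, whereas you pick $\pi$ to satisfy a $\models_{\Box Y}$-condition and are then stuck trying to strengthen it. If you want to salvage your outline, the simplest repair is to run the decomposition on $X\wedge\neg\pi'\models_{\Box Y'}\bot$ rather than on $\pi\wedge\neg\pi'\models_{\Box Y}\bot$; that puts you on the paper's track.
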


\begin{proof}
Let $X=(\vee_{i=1}^{l}\alpha_i)\wedge(\vee_{j=1}^{m}\Diamond\beta_j)\wedge(\vee_{k=1}^{n}\Box\gamma_{k})\wedge((\vee_{i=1}^{l}\alpha_i)\vee(\vee_{j=1}^{m}\Diamond\beta_j))\wedge((\vee_{i=1}^{l}\alpha_i)\vee(\vee_{k=1}^{n}\Box\gamma_{k}))\wedge((\vee_{j=1}^{m}\Diamond\beta_j)\vee(\vee_{k=1}^{n}\Box\gamma_{k}))\wedge((\vee_{i=1}^{l}\alpha_i)\vee(\vee_{j=1}^{m}\Diamond\beta_j)\vee(\vee_{k=1}^{n}\Box\gamma_{k}))$ be a formula in $\mathcal{K}$ where $\alpha_1,\alpha_2,\ldots,\alpha_l$ be propositional formulae and $\beta_1,\beta_2,\ldots,\beta_m,\gamma_1,\gamma_2,\ldots,\gamma_n$ be formulae in $\mathcal{K}$. Let $\pi^{'}=\psi_1^{'}\vee\ldots\vee\psi_p^{'}\vee\Diamond\phi_1^{'}\vee\ldots\vee\Diamond\phi_q^{'}\vee\Box\xi_1^{'}\vee\ldots\vee\Box\xi_r^{'}$ be a theory prime implicate of $X$ with respect to $\Box Y^{'}$ where $\psi_1,\psi_2,\ldots,\psi_p$ be propositional formulae and $\phi_1,\phi_2,\ldots,\phi_q,\xi_1,\xi_2,\ldots,\xi_r$ be formulae in $\mathcal{K}$. So $\pi^{'}\in\Theta(X,\Box Y^{'})$, i.e, $X\models_{\Box Y^{'}}\pi^{'}$, i.e, $X\wedge\neg\pi^{'}\models_{\Box Y^{'}}\bot$. This implies $(\vee_{i=1}^{l}\alpha_i)\wedge(\vee_{j=1}^{m}\Diamond\beta_j)\wedge(\vee_{k=1}^{n}\Box\gamma_{k})\wedge((\vee_{i=1}^{l}\alpha_i)\vee(\vee_{j=1}^{m}\Diamond\beta_j))\wedge((\vee_{i=1}^{l}\alpha_i)\vee(\vee_{k=1}^{n}\Box\gamma_{k}))\wedge((\vee_{j=1}^{m}\Diamond\beta_j)\vee(\vee_{k=1}^{n}\Box\gamma_{k}))\wedge((\vee_{i=1}^{l}\alpha_i)\vee(\vee_{j=1}^{m}\Diamond\beta_j)\vee(\vee_{k=1}^{n}\Box\gamma_{k}))\wedge\neg\psi_1^{'}\wedge\ldots\wedge\neg\psi_p^{'}\wedge\Box\neg\phi_1^{'}\wedge\ldots\wedge\Box\neg\phi_q^{'}\wedge\Diamond\neg\xi_1^{'}\wedge\ldots\wedge\Diamond\neg\xi_r^{'}\models_{\Box Y^{'}}\bot$. Then by Theorem \ref{several_possibilities_wrt_boxY}, one of the following will hold.
\begin{enumerate}
\item $(\vee_{i=1}^{l}\alpha_i)\wedge\neg\psi_1^{'}\wedge\ldots\wedge\neg\psi_p^{'}\models_{Y^{'}}\bot$. So $(\vee_{i=1}^{l}\alpha_i)\models_{Y^{'}}\psi_1^{'}\vee\ldots\vee\psi_p^{'}$. Hence by property of $\mathcal{T}$, $(\vee_{i=1}^{l}\alpha_i)\models_{\Box Y^{'}}\psi_1^{'}\vee\ldots\vee\psi_p^{'}$. Hence, $(\vee_{i=1}^{l}\alpha_i)\models_{\Box Y^{'}}\pi^{'}$. As $(\vee_{i=1}^{l}\alpha_i)$ is a theory implicate of $X$ with respect to $\Box Y^{'}$, so $(\vee_{i=1}^{l}\alpha_i)\equiv_{\Box Y^{'}}\pi^{'}$. As $Y\models Y^{'}$, so by Lemma \ref{DiamondBox_equiv}, $(\vee_{i=1}^{l}\alpha_i)\models_{\Box Y}(\vee_{i=1}^{l}\alpha_i)\models_{\Box Y^{'}}\pi^{'}$. So $(\vee_{i=1}^{l}\alpha_i)\models_{\Box Y}\pi^{'}$. So we note that $\pi^{'}$ is not a theory prime implicate of $X$ with respect to $\Box Y$ but as $X\models_{\Box Y}(\vee_{i=1}^{l}\alpha_i)$ so $(\vee_{i=1}^{l}\alpha_i)$ is a theory implicate of $X$ with respect to $\Box Y$. As there does not exist any $C$ such that $X\models_{\Box Y} C$ and $C\models_{\Box Y}(\vee_{i=1}^{l}\alpha_i)$ so $(\vee_{i=1}^{l}\alpha_i)$ is a theory prime implicate of $X$ with respect to $\Box Y$. Assuming $\pi=(\vee_{i=1}^{l}\alpha_i)$ we have shown that for every $\pi^{'}\in\Theta(X,\Box Y^{'})$ there exists a $\pi\in\Theta(X,\Box Y)$ such that $\pi^{'}\equiv_{\Box Y^{'}}\pi$. 
\item $(\vee_{j=1}^{m}\beta_j)\wedge\neg\phi_1^{'}\wedge\ldots\wedge\neg\phi_q^{'}\models_{Y^{'}}\bot$. So $(\vee_{j=1}^{m}\beta_j)\models_{Y^{'}}(\phi_1^{'}\vee\ldots\vee\phi_q^{'})$. Then by Lemma \ref{symmetry_entailment}, $\Diamond(\vee_{j=1}^{m}\beta_j)\models_{\Box Y^{'}}\Diamond(\phi_1^{'}\vee\ldots\vee\phi_q^{'})$. By property of $\mathcal{K}$, we have $(\vee_{j=1}^{m}\Diamond\beta_j)\models_{\Box Y^{'}}(\Diamond\phi_1^{'}\vee\ldots\vee\Diamond\phi_q^{'})$. So $(\vee_{j=1}^{m}\Diamond\beta_j)\models_{\Box Y^{'}}\pi^{'}$. As $(\vee_{j=1}^{m}\Diamond\beta_j)$ is a theory implicate of $X$ with respect to $\Box Y^{'}$ so $(\vee_{j=1}^{m}\Diamond\beta_j)\equiv_{\Box Y^{'}}\pi^{'}$. As $Y\models Y^{'}$, so by Lemma \ref{DiamondBox_equiv}, $(\vee_{j=1}^{m}\Diamond\beta_j)\models_{\Box Y}(\vee_{j=1}^{m}\Diamond\beta_j)\models_{\Box Y^{'}}\pi^{'}$. So $(\vee_{j=1}^{m}\Diamond\beta_j)\models_{\Box Y}\pi^{'}$. So $\pi^{'}$ is not a theory prime implicate of $X$ with respect to $\Box Y$ but as $X\models_{\Box Y}(\vee_{j=1}^{m}\Diamond\beta_j)$ so $(\vee_{j=1}^{m}\Diamond\beta_j)$ is a theory implicate of $X$ with respect to $\Box Y$. As there does not exist any $C$ such that $X\models_{\Box Y} C$ and $C\models_{\Box Y}(\vee_{j=1}^{m}\Diamond\beta_j)$ so $(\vee_{j=1}^{m}\Diamond\beta_j)$ is a theory prime implicate of $X$ with respect to $\Box Y$. Assuming $(\vee_{j=1}^{m}\Diamond\beta_j)=\pi$, we have shown that for every $\pi^{'}\in\Theta(X,\Box Y^{'})$ there exists a $\pi\in\Theta(X,\Box Y)$ such that $\pi^{'}\equiv_{\Box Y^{'}}\pi$. 
\item $(\vee_{k=1}^{n}\gamma_{k})\wedge\neg\xi_u^{'}\wedge\neg\phi_1^{'}\wedge\ldots\wedge\phi_q^{'}\models_{Y^{'}}\bot$ for $1\leq u\leq r$. So $(\vee_{k=1}^{n}\gamma_{k})\models_{Y^{'}}\xi_u^{'}\vee\phi_1^{'}\vee\ldots\vee\phi_q^{'}$. so by Lemma \ref{symmetry_entailment}, $\Box(\vee_{k=1}^{n}\gamma_{k})\models_{\Box Y^{'}}\Box(\xi_u^{'}\vee\phi_1^{'}\vee\ldots\vee\phi_q^{'})$. By property of $\mathcal{K}$ we have, $(\vee_{k=1}^{n}\Box\gamma_{k})\models_{\Box Y^{'}}\Box(\vee_{k=1}^{n}\gamma_{k})\models_{\Box Y^{'}}\Box\xi_u^{'}\vee\Diamond(\phi_1^{'}\vee\ldots\vee\phi_q^{'})\models_{\Box Y^{'}}\Box\xi_u^{'}\vee\Diamond\phi_1^{'}\vee\ldots\vee\Diamond\phi_q^{'}$. So, $(\vee_{k=1}^{n}\Box\gamma_{k})\models_{\Box Y^{'}}\Box\xi_u^{'}\vee\Diamond\phi_1^{'}\vee\ldots\vee\Diamond\phi_q^{'}$. So $(\vee_{k=1}^{n}\Box\gamma_{k})\models_{\Box Y^{'}}\pi^{'}$. As $(\vee_{k=1}^{n}\Box\gamma_{k})$ is a theory implicate of $X$ with respect to $\Box Y^{'}$, So $(\vee_{k=1}^{n}\Box\gamma_{k})\equiv_{\Box Y^{'}}\pi^{'}$. As $Y\models Y^{'}$, so by Lemma \ref{DiamondBox_equiv}, $(\vee_{k=1}^{n}\Box\gamma_{k})\models_{\Box Y}(\vee_{k=1}^{n}\Box\gamma_{k})\models_{\Box Y^{'}}\pi^{'}$ so $(\vee_{k=1}^{n}\Box\gamma_{k})\models_{\Box Y}\pi^{'}$. So $\pi^{'}$ is not a theory prime implicate of $X$ with respect to $\Box Y$ but as $X\models_{\Box Y}(\vee_{k=1}^{n}\Box\gamma_{k})$, so $(\vee_{k=1}^{n}\Box\gamma_{k})$ is a theory implicate of $X$ with respect to $\Box Y$. As there does not exist any $C$ such that $X\models_{\Box Y} C$ and $C\models_{\Box Y}(\vee_{k=1}^{n}\Box\gamma_{k})$ so $(\vee_{k=1}^{n}\Box\gamma_{k})$ is theory prime implicate of $X$ with respect to $\Box Y$. Assuming $(\vee_{k=1}^{n}\Box\gamma_{k})=\pi$, we have shown that for every $\pi^{'}\in\Theta(X,\Box Y^{'})$ there exists a $\pi\in\Theta(X,\Box Y)$ such that $\pi^{'}\equiv_{\Box Y^{'}}\pi$. 
\item $((\vee_{i=1}^{l}\alpha_i)\vee(\vee_{j=1}^{m}\beta_j))\wedge\neg\phi_1^{'}\wedge\ldots\wedge\neg\phi_q^{'}\models_{Y^{'}}\bot$. So, $((\vee_{i=1}^{l}\alpha_i)\vee(\vee_{j=1}^{m}\beta_j))\models_{Y^{'}}\phi_1^{'}\vee\ldots\vee\phi_q^{'}$. By Lemma \ref{symmetry_entailment}, $\Diamond((\vee_{i=1}^{l}\alpha_i)\vee(\vee_{j=1}^{m}\beta_j))\models_{\Box Y^{'}}\Diamond(\phi_1^{'}\vee\ldots\vee\phi_q^{'})$. By property of $\mathcal{K}$, $((\vee_{i=1}^{l}\Diamond\alpha_i)\vee(\vee_{j=1}^{m}\Diamond\beta_j))\models_{\Box Y^{'}}\Diamond\phi_1^{'}\vee\ldots\vee\Diamond\phi_q^{'}\models_{\Box Y^{'}}\pi^{'}$. Again by property of $\mathcal{T}$, $((\vee_{i=1}^{l}\alpha_i)\vee(\vee_{j=1}^{m}\Diamond\beta_j))\models_{\Box Y^{'}}((\vee_{i=1}^{l}\Diamond\alpha_i)\vee(\vee_{j=1}^{m}\Diamond\beta_j))\models_{\Box Y^{'}}\pi^{'}$. As $((\vee_{i=1}^{l}\alpha_i)\vee(\vee_{j=1}^{m}\Diamond\beta_j))$ is a theory implicate of $X$ with respect to $\Box Y^{'}$, so $((\vee_{i=1}^{l}\alpha_i)\vee(\vee_{j=1}^{m}\Diamond\beta_j))\equiv_{\Box Y^{'}}\pi^{'}$. As $Y\models Y^{'}$, so by Lemma \ref{DiamondBox_equiv}, $((\vee_{i=1}^{l}\alpha_i)\vee(\vee_{j=1}^{m}\Diamond\beta_j))\models_{\Box Y}((\vee_{i=1}^{l}\alpha_i)\vee(\vee_{j=1}^{m}\Diamond\beta_j))\models_{\Box Y^{'}}\pi^{'}$. So $((\vee_{i=1}^{l}\alpha_i)\vee(\vee_{j=1}^{m}\Diamond\beta_j))\models_{\Box Y}\pi^{'}$. So $\pi^{'}$ is not a theory prime implicate of $X$ with respect to $\Box Y$ but as $X\models_{\Box Y}((\vee_{i=1}^{l}\alpha_i)\vee(\vee_{j=1}^{m}\Diamond\beta_j))$ so $((\vee_{i=1}^{l}\alpha_i)\vee(\vee_{j=1}^{m}\Diamond\beta_j))$ is a theory implicate of $X$ with respect to $\Box Y$.
As there does not exist any $C$ such that $X\models_{\Box Y} C$ and $C\models_{\Box Y}((\vee_{i=1}^{l}\alpha_i)\vee(\vee_{j=1}^{m}\Diamond\beta_j))$, so $((\vee_{i=1}^{l}\alpha_i)\vee(\vee_{j=1}^{m}\Diamond\beta_j))$ is a theory prime implicate of $X$ with respect to $\Box Y$. Assuming $\pi=((\vee_{i=1}^{l}\alpha_i)\vee(\vee_{j=1}^{m}\Diamond\beta_j))$ we have shown that for every $\pi^{'}\in\Theta(X,\Box Y^{'})$ there exists a $\pi\in\Theta(X,\Box Y)$ such that $\pi^{'}\equiv_{\Box Y^{'}}\pi$.
\item $((\vee_{i=1}^{l}\alpha_i)\vee(\vee_{k=1}^{n}\gamma_{k}))\wedge\neg\xi_u^{'}\wedge\neg\phi_1^{'}\wedge\ldots\wedge\neg\phi_q^{'}\models_{Y^{'}}\bot$ for $1\leq u\leq r$. This implies, $((\vee_{i=1}^{l}\alpha_i)\vee(\vee_{k=1}^{n}\gamma_{k}))\models_{Y^{'}}\xi_u^{'}\vee\phi_1^{'}\vee\ldots\vee\phi_q^{'}$. By Lemma \ref{symmetry_entailment}, $\Box((\vee_{i=1}^{l}\alpha_i)\vee(\vee_{k=1}^{n}\gamma_{k}))\models_{\Box Y^{'}}\Box(\xi_u^{'}\vee\phi_1^{'}\vee\ldots\vee\phi_q^{'})$. By property of $\mathcal{K}$, $((\vee_{i=1}^{l}\Box\alpha_i)\vee(\vee_{k=1}^{n}\Box\gamma_{k}))\models_{\Box Y^{'}} \Box((\vee_{i=1}^{l}\alpha_i)\vee(\vee_{k=1}^{n}\gamma_{k}))$ and $\Box(\xi_u^{'}\vee\phi_1^{'}\vee\ldots\vee\phi_q^{'})\models_{\Box Y^{'}}\Box\xi_u^{'}\vee\Diamond(\phi_1^{'}\vee\ldots\vee\phi_q^{'}))\models_{\Box Y^{'}}\Box\xi_u^{'}\vee\Diamond\phi_1^{'}\vee\ldots\vee\Diamond\phi_q^{'})\models_{\Box Y^{'}}\pi^{'}$, so $((\vee_{i=1}^{l}\Box\alpha_i)\vee(\vee_{k=1}^{n}\Box\gamma_{k}))\models_{\Box Y^{'}}\pi^{'}$. As $\alpha_i$ for $1\leq i\leq l$ is a propositional formula, $((\vee_{i=1}^{l}\alpha_i)\vee(\vee_{k=1}^{n}\Box\gamma_{k}))\models_{\Box Y^{'}}((\vee_{i=1}^{l}\Box\alpha_i)\vee(\vee_{k=1}^{n}\Box\gamma_{k}))\models_{\Box Y^{'}}\pi^{'}$. As $((\vee_{i=1}^{l}\alpha_i)\vee(\vee_{k=1}^{n}\Box\gamma_{k}))$ is a theory implicate of $X$ with respect to $\Box Y^{'}$, so $((\vee_{i=1}^{l}\alpha_i)\vee(\vee_{k=1}^{n}\Box\gamma_{k}))\equiv_{\Box Y^{'}}\pi^{'}$. Again as $Y\models Y^{'}$, so by Lemma \ref{DiamondBox_equiv}, $((\vee_{i=1}^{l}\alpha_i)\vee(\vee_{k=1}^{n}\Box\gamma_{k}))\models_{\Box Y}((\vee_{i=1}^{l}\alpha_i)\vee(\vee_{k=1}^{n}\Box\gamma_{k}))\models_{\Box Y^{'}}\pi^{'}$. Hence $((\vee_{i=1}^{l}\alpha_i)\vee(\vee_{k=1}^{n}\Box\gamma_{k}))\models_{\Box Y}\pi^{'}$. So, $\pi^{'}$ is not a theory prime implicate of $X$ with respect to $\Box Y$ but as $X\models_{\Box Y}((\vee_{i=1}^{l}\alpha_i)\vee(\vee_{k=1}^{n}\Box\gamma_{k}))$, so $((\vee_{i=1}^{l}\alpha_i)\vee(\vee_{k=1}^{n}\Box\gamma_{k}))$ is a theory implicate of $X$ with respect to $\Box Y$. As there does not exist any $C$ such that $X\models_{\Box Y} C$ and $C\models_{\Box Y}((\vee_{i=1}^{l}\alpha_i)\vee(\vee_{k=1}^{n}\Box\gamma_{k}))$, So $((\vee_{i=1}^{l}\alpha_i)\vee(\vee_{k=1}^{n}\Box\gamma_{k}))$ is a theory prime implicate of $X$ with respect to $\Box Y$. Assuming $\pi=((\vee_{i=1}^{l}\alpha_i)\vee(\vee_{k=1}^{n}\Box\gamma_{k}))$ we have shown that for every $\pi^{'}\in\Theta(X,\Box Y^{'})$ there exists a $\pi\in\Theta(X,\Box Y)$ such that $\pi^{'}\equiv_{\Box Y^{'}}\pi$.
\item $((\vee_{j=1}^{m}\beta_j)\vee(\vee_{k=1}^{n}\gamma_{k}))\wedge\neg\phi_1^{'}\wedge\ldots\wedge\neg\phi_q^{'}\models_{Y^{'}}\bot$. This implies, $((\vee_{j=1}^{m}\beta_j)\vee(\vee_{k=1}^{n}\gamma_{k}))\models_{Y^{'}}\phi_1^{'}\vee\ldots\vee\phi_q^{'}$. By Lemma \ref{symmetry_entailment}, we have $\Diamond((\vee_{j=1}^{m}\beta_j)\vee(\vee_{k=1}^{n}\gamma_{k}))\models_{\Box Y^{'}}\Diamond(\phi_1^{'}\vee\ldots\vee\phi_q^{'})$. By property of $\mathcal{K}$, $((\vee_{j=1}^{m}\Diamond\beta_j)\vee(\vee_{k=1}^{n}\Diamond\gamma_{k}))\models_{\Box Y^{'}}(\Diamond\phi_1^{'}\vee\ldots\vee\Diamond\phi_q^{'})$. Again by property of $\mathcal{T}$, $((\vee_{j=1}^{m}\Diamond\beta_j)\vee(\vee_{k=1}^{n}\Box\gamma_{k}))\models_{\Box Y^{'}}((\vee_{j=1}^{m}\Diamond\beta_j)\vee(\vee_{k=1}^{n}\Diamond\gamma_{k}))\models_{\Box Y^{'}}(\Diamond\phi_1^{'}\vee\ldots\vee\Diamond\phi_q^{'})\models_{\Box Y^{'}}\pi^{'}$. So $((\vee_{j=1}^{m}\Diamond\beta_j)\vee(\vee_{k=1}^{n}\Box\gamma_{k}))\models_{\Box Y^{'}}\pi^{'}$. As $((\vee_{j=1}^{m}\Diamond\beta_j)\vee(\vee_{k=1}^{n}\Box\gamma_{k}))$ is a theory implicate of $X$ with respect to $\Box Y^{'}$, so $((\vee_{j=1}^{m}\Diamond\beta_j)\vee(\vee_{k=1}^{n}\Box\gamma_{k}))\equiv_{\Box Y^{'}}\pi^{'}$. As $Y\models Y^{'}$, so by Lemma \ref{DiamondBox_equiv} $((\vee_{j=1}^{m}\Diamond\beta_j)\vee(\vee_{k=1}^{n}\Box\gamma_{k}))\models_{\Box Y}((\vee_{j=1}^{m}\Diamond\beta_j)\vee(\vee_{k=1}^{n}\Box\gamma_{k}))\models_{\Box Y^{'}}\pi^{'}$. Hence, $((\vee_{j=1}^{m}\Diamond\beta_j)\vee(\vee_{k=1}^{n}\Box\gamma_{k}))\models_{\Box Y}\pi^{'}$. So $\pi^{'}$ is not a theory prime implicate of $X$ with respect to $\Box Y$ but as $X\models_{\Box Y}((\vee_{j=1}^{m}\Diamond\beta_j)\vee(\vee_{k=1}^{n}\Box\gamma_{k}))$ so, $((\vee_{j=1}^{m}\Diamond\beta_j)\vee\vee_{k=1}^{n}\Box\gamma_{k}))$ is a theory implicate of $X$ with respect to $\Box Y$. As there does not exist any $C$ such that $X\models_{\Box Y} C$ and $C\models_{\Box Y}((\vee_{j=1}^{m}\Diamond\beta_j)\vee(\vee_{k=1}^{n}\Box\gamma_{k}))$, so $((\vee_{j=1}^{m}\Diamond\beta_j)\vee(\vee_{k=1}^{n}\Box\gamma_{k}))$ is a theory prime implicate of $X$ with respect to $\Box Y$. Assuming $((\vee_{j=1}^{m}\Diamond\beta_j)\vee(\vee_{k=1}^{n}\Box\gamma_{k}))=\pi$, we have shown that for every $\pi^{'}\in\Theta(X,\Box Y^{'})$ there exists a $\pi\in\Theta(X,\Box Y)$ such that $\pi^{'}\equiv_{\Box Y^{'}}\pi$.
\item $((\vee_{i=1}^{l}\alpha_i)\vee(\vee_{j=1}^{m}\beta_j)\vee(\vee_{k=1}^{n}\gamma_{k}))\wedge\neg\phi_1^{'}\wedge\ldots\wedge\neg\phi_q^{'}\models_{Y^{'}}\bot$. This implies, $((\vee_{i=1}^{l}\alpha_i)\vee(\vee_{j=1}^{m}\beta_j)\vee(\vee_{k=1}^{n}\gamma_{k}))\models_{Y^{'}}\phi_1^{'}\vee\ldots\vee\phi_q^{'}$. By Lemma \ref{symmetry_entailment}, $\Diamond((\vee_{i=1}^{l}\alpha_i)\vee(\vee_{j=1}^{m}\beta_j)\vee(\vee_{k=1}^{n}\gamma_{k}))\models_{\Box Y^{'}}\Diamond(\phi_1^{'}\vee\ldots\vee\phi_q^{'})$. By property of $\mathcal{K}$, $((\vee_{i=1}^{l}\Diamond\alpha_i)\vee(\vee_{j=1}^{m}\Diamond\beta_j)\vee(\vee_{k=1}^{n}\Diamond\gamma_{k}))\models_{\Box Y^{'}}(\Diamond\phi_1^{'}\vee\ldots\vee\Diamond\phi_q^{'})$. Again by property of $\mathcal{T}$, $((\vee_{i=1}^{l}\alpha_i)\vee(\vee_{j=1}^{m}\Diamond\beta_j)\vee(\vee_{k=1}^{n}\Box\gamma_{k}))\models_{\Box Y^{'}}((\vee_{i=1}^{l}\Diamond\alpha_i)\vee(\vee_{j=1}^{m}\Diamond\beta_j)\vee(\vee_{k=1}^{n}\Diamond\gamma_{k}))\models_{\Box Y^{'}}(\Diamond\phi_1^{'}\vee\ldots\vee\Diamond\phi_q^{'})\models_{\Box Y^{'}}\pi^{'}$. So, $((\vee_{i=1}^{l}\alpha_i)\vee(\vee_{j=1}^{m}\Diamond\beta_j)\vee(\vee_{k=1}^{n}\Box\gamma_{k}))\models_{\Box Y^{'}}\pi^{'}$. As $((\vee_{i=1}^{l}\alpha_i)\vee(\vee_{j=1}^{m}\Diamond\beta_j)\vee(\vee_{k=1}^{n}\Box\gamma_{k}))$ is a theory implicate of $X$ with respect to $\Box Y^{'}$ so, $((\vee_{i=1}^{l}\alpha_i)\vee(\vee_{j=1}^{m}\Diamond\beta_j)\vee(\vee_{k=1}^{n}\Box\gamma_{k}))\equiv_{\Box Y^{'}}\pi^{'}$. As $Y\models Y^{'}$, so by Lemma \ref{DiamondBox_equiv}, $((\vee_{i=1}^{l}\alpha_i)\vee(\vee_{j=1}^{m}\Diamond\beta_j)\vee(\vee_{k=1}^{n}\Box\gamma_{k}))\models_{\Box Y}((\vee_{i=1}^{l}\alpha_i)\vee(\vee_{j=1}^{m}\Diamond\beta_j)\vee(\vee_{k=1}^{n}\Box\gamma_{k}))\models_{\Box Y^{'}}\pi^{'}$. Hence, $((\vee_{i=1}^{l}\alpha_i)\vee(\vee_{j=1}^{m}\Diamond\beta_j)\vee(\vee_{k=1}^{n}\Box\gamma_{k}))\models_{\Box Y}\pi^{'}$. So $\pi^{'}$ is not a theory prime implicate of $X$ with respect to $\Box Y$ but as $X\models_{\Box Y}((\vee_{i=1}^{l}\alpha_i)\vee(\vee_{j=1}^{m}\Diamond\beta_j)\vee(\vee_{k=1}^{n}\Box\gamma_{k}))$ so  $((\vee_{i=1}^{l}\alpha_i)\vee(\vee_{j=1}^{m}\Diamond\beta_j)\vee(\vee_{k=1}^{n}\Box\gamma_{k}))$ is theory implicate of $X$ with respect to $\Box Y$. As there does not exist any $C$ such that $X\models_{\Box Y} C$ and $C\models_{\Box Y}((\vee_{i=1}^{l}\alpha_i)\vee(\vee_{j=1}^{m}\Diamond\beta_j)\vee(\vee_{k=1}^{n}\Box\gamma_{k}))$ so $((\vee_{i=1}^{l}\alpha_i)\vee(\vee_{j=1}^{m}\Diamond\beta_j)\vee(\vee_{k=1}^{n}\Box\gamma_{k}))$ is a theory prime implicate of $X$ with respect to $\Box Y$. Assuming $((\vee_{i=1}^{l}\alpha_i)\vee(\vee_{j=1}^{m}\Diamond\beta_j)\vee(\vee_{k=1}^{n}\Box\gamma_{k}))=\pi$,  we have shown that for every $\pi^{'}\in\Theta(X,\Box Y^{'})$ there exists a $\pi\in\Theta(X,\Box Y)$ such that $\pi^{'}\equiv_{\Box Y^{'}}\pi$. Hence proved. \hfill{$\Box$}
\end{enumerate}
\end{proof}

The following result which holds in $\mathcal{T}$ shows that the size of $\Theta(X,\Box Y)$ is always smaller than the size of $\Pi(X\cup\Box Y)$ which is an advantage to our compilation.

\begin{theorem}\label{tpi_size_less_than_pi}
Let $X$ be a modal formula and $Y$ be any propositional formula. For every $\pi^{'}\in\Theta(X, \Box Y)$ there exists a $\pi\in\Pi(X\cup\Box Y)$ such that $\pi^{'}\equiv_{\Box Y}\pi$. Consequently, $\mbox{nb\_cl}(\Theta(X,\Box Y)) \leq\mbox{nb\_cl}(\Pi(X\cup \Box Y))$.
\end{theorem}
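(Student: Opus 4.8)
The statement is in fact already implied by results proved above: Lemma~\ref{tpi_pi_subset} gives $\Theta(X,\Box Y)\subseteq\Pi(X\cup\Box Y)$ (equivalently, by Theorem~\ref{minimal_wrt_pi}, $\Theta(X,\Box Y)=\mbox{min}(\Pi(X\cup\Box Y),\models_{\Box Y})$ is a subcollection of $\Pi(X\cup\Box Y)$), so for any $\pi'\in\Theta(X,\Box Y)$ one may take $\pi=\pi'$, and the counting inequality $\mbox{nb\_cl}(\Theta(X,\Box Y))\le\mbox{nb\_cl}(\Pi(X\cup\Box Y))$ is immediate from the inclusion. Since that argument uses nothing peculiar to $\mathcal{T}$, I would instead give the more self-contained, ``structural'' proof that parallels the proof of Theorem~\ref{weaken_the_consequence} and for which Theorem~\ref{several_possibilities_wrt_boxY} was set up.

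The plan is as follows. First, normalize the data: write $X$ in the conjunctive shape of Theorem~\ref{several_possibilities_wrt_boxY}, a conjunction of clauses of the seven forms $(\vee_i\alpha_i)$, $(\vee_j\Diamond\beta_j)$, $(\vee_k\Box\gamma_k)$ and the four mixed disjunctions of these, with the $\alpha_i$ propositional; and write a fixed $\pi'\in\Theta(X,\Box Y)$ as $\pi'=\psi_1'\vee\cdots\vee\psi_p'\vee\Diamond\phi_1'\vee\cdots\vee\Diamond\phi_q'\vee\Box\xi_1'\vee\cdots\vee\Box\xi_r'$ with the $\psi_i'$ propositional. From $\pi'\in\Theta(X,\Box Y)$ we have $X\models_{\Box Y}\pi'$, hence $X\wedge\neg\pi'\models_{\Box Y}\bot$; pushing the negation inward, $\neg\pi'$ becomes $\neg\psi_1'\wedge\cdots\wedge\neg\psi_p'\wedge\Box\neg\phi_1'\wedge\cdots\wedge\Box\neg\phi_q'\wedge\Diamond\neg\xi_1'\wedge\cdots\wedge\Diamond\neg\xi_r'$, so $X\wedge\neg\pi'$ is in exactly the form to which Theorem~\ref{several_possibilities_wrt_boxY} applies (with $\psi_i\mapsto\neg\psi_i'$, $\phi_i\mapsto\neg\phi_i'$, $\xi_i\mapsto\neg\xi_i'$), and one of its seven propositional-level alternatives holds.

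Second, dispatch the seven cases, exactly as in cases (1)--(7) of the proof of Theorem~\ref{weaken_the_consequence} but with $Y'$ taken to be $Y$. In each case the failing alternative is a propositional-style $Y$-unsatisfiability which, by the manipulations used there, yields some conjunct $\pi$ of $X$ with $\pi\models_{\Box Y}\pi'$: one moves the resulting $Y$-entailment under $\Diamond$ or $\Box$ via Lemma~\ref{symmetry_entailment}, distributes the modality over $\vee$ using properties of $\mathcal{K}$ ($\Diamond(A\vee B)\equiv\Diamond A\vee\Diamond B$, $\Box A\vee\Box B\models\Box(A\vee B)$, $\Box(A\vee B)\models\Box A\vee\Diamond B$), and uses the reflexivity of $\mathcal{T}$ in the forms $\alpha\models\Diamond\alpha$, $\Box\alpha\models\alpha$, $\Box Y\models Y$ to match up with $\pi$. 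Since $\pi$ is a conjunct of $X$ it is a theory implicate of $X$ with respect to $\Box Y$; since $\pi'$ is a theory \emph{prime} implicate and $\pi\models_{\Box Y}\pi'$, primality forces $\pi\equiv_{\Box Y}\pi'$, so $\pi$ is itself a theory prime implicate of $X$ with respect to $\Box Y$ and hence $\pi\in\Theta(X,\Box Y)\subseteq\Pi(X\cup\Box Y)$ by Lemma~\ref{tpi_pi_subset}. Thus for every $\pi'\in\Theta(X,\Box Y)$ there is a $\pi\in\Pi(X\cup\Box Y)$ with $\pi'\equiv_{\Box Y}\pi$; since inequivalent (modulo $\equiv_{\Box Y}$) elements of $\Theta(X,\Box Y)$ receive inequivalent $\pi$'s, the assignment is injective and gives $\mbox{nb\_cl}(\Theta(X,\Box Y))\le\mbox{nb\_cl}(\Pi(X\cup\Box Y))$.

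I expect the only real difficulty to be the bookkeeping in the mixed cases (4)--(7): one must instantiate Theorem~\ref{several_possibilities_wrt_boxY} with the correct negated literals, and then, in each alternative, pick the conjunct of $X$ for which the $\mathcal{T}$-step ($\alpha\models\Diamond\alpha$ or $\Box\alpha\models\alpha$) points in the direction that genuinely yields $\pi\models_{\Box Y}\pi'$; the ``obvious'' mixed conjunct is not always the workable choice, and sometimes a smaller conjunct such as $(\vee_k\Box\gamma_k)$ is the one to take.
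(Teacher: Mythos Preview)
Your structural proof is essentially the paper's: both write $X$ in the seven-clause shape of Theorem~\ref{several_possibilities_wrt_boxY}, decompose $\pi'$ as a propositional/diamond/box disjunction, apply Theorem~\ref{several_possibilities_wrt_boxY} to $X\wedge\neg\pi'\models_{\Box Y}\bot$, and in each of the seven cases exhibit a conjunct $\pi$ of $X$ with $\pi\models_{\Box Y}\pi'$, hence $\pi\equiv_{\Box Y}\pi'$ by primality of $\pi'$. The modal manipulations you list (Lemma~\ref{symmetry_entailment}, $\Diamond$-distribution, the $\mathcal{T}$-axioms $\alpha\models\Diamond\alpha$ and $\Box\alpha\models\alpha$) are exactly those the paper uses case by case.

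The one place you diverge is the last step. The paper does \emph{not} invoke Lemma~\ref{tpi_pi_subset}; it asserts directly, in each case, that the exhibited conjunct $\pi$ of $X$ is a prime implicate of $X\cup\Box Y$ (``there doesn't exist any $C$ such that $X\wedge\Box Y\models C$ and $C\models\pi$''). Your route instead shows $\pi$ is a theory prime implicate and then appeals to $\Theta(X,\Box Y)\subseteq\Pi(X\cup\Box Y)$. That is cleaner logically, but it also makes the entire case analysis superfluous, as you yourself note at the outset: once you are willing to use Lemma~\ref{tpi_pi_subset}, taking $\pi=\pi'$ already does the job. So your opening two-line observation is a genuine simplification the paper does not record, while your ``structural'' argument ends up leaning on the very inclusion that renders the structure unnecessary. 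If you want a self-contained structural proof in the paper's spirit, you should finish each case the way the paper does, by arguing membership in $\Pi(X\cup\Box Y)$ without passing through Lemma~\ref{tpi_pi_subset}.
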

\begin{proof}
Let $X=(\vee_{i=1}^{l}\alpha_i)\wedge(\vee_{j=1}^{m}\Diamond\beta_j)\wedge(\vee_{k=1}^{n}\Box\gamma_{k})\wedge((\vee_{i=1}^{l}\alpha_i)\vee(\vee_{j=1}^{m}\Diamond\beta_j))\wedge((\vee_{i=1}^{l}\alpha_i)\vee(\vee_{k=1}^{n}\Box\gamma_{k}))\wedge((\vee_{j=1}^{m}\Diamond\beta_j)\vee(\vee_{k=1}^{n}\Box\gamma_{k}))\wedge((\vee_{i=1}^{l}\alpha_i)\vee(\vee_{j=1}^{m}\Diamond\beta_j)\vee(\vee_{k=1}^{n}\Box\gamma_{k}))$ be a formula in $\mathcal{K}$ where $\alpha_1,\alpha_2,\ldots,\alpha_l$ be propositional formulae and $\beta_1,\beta_2,\ldots,\beta_m,\gamma_1,\gamma_2,\ldots,\gamma_n$ be formulae in $\mathcal{K}$. Let $\pi^{'}=\psi_1^{'}\vee\ldots\vee\psi_p^{'}\vee\Diamond\phi_1^{'}\vee\ldots\vee\Diamond\phi_q^{'}\vee\Box\xi_1^{'}\vee\ldots\vee\Box\xi_r^{'}$ be a theory prime implicate of $X$ with respect to $\Box Y$, where $\psi_1,\psi_2,\ldots,\psi_p$ be propositional formulae and $\phi_1,\phi_2,\ldots,\phi_q,\xi_1,\xi_2,\ldots,\xi_r$ be formulae in $\mathcal{K}$. So $\pi^{'}\in\Theta(X,\Box Y)$, i.e, $X\models_{\Box Y}\pi^{'}$, i.e, $X\wedge\neg\pi^{'}\models_{\Box Y}\bot$. This implies $(\vee_{i=1}^{l}\alpha_i)\wedge(\vee_{j=1}^{m}\Diamond\beta_j)\wedge(\vee_{k=1}^{n}\Box\gamma_{k})\wedge((\vee_{i=1}^{l}\alpha_i)\vee(\vee_{j=1}^{m}\Diamond\beta_j))\wedge((\vee_{i=1}^{l}\alpha_i)\vee(\vee_{k=1}^{n}\Box\gamma_{k}))\wedge((\vee_{j=1}^{m}\Diamond\beta_j)\vee(\vee_{k=1}^{n}\Box\gamma_{k}))\wedge((\vee_{i=1}^{l}\alpha_i)\vee(\vee_{j=1}^{m}\Diamond\beta_j)\vee(\vee_{k=1}^{n}\Box\gamma_{k}))\wedge\neg\psi_1^{'}\wedge\ldots\wedge\neg\psi_p^{'}\wedge\Box\neg\phi_1^{'}\wedge\ldots\wedge\Box\neg\phi_q^{'}\wedge\Diamond\neg\xi_1^{'}\wedge\ldots\wedge\Diamond\neg\xi_r^{'}\models_{\Box Y}\bot$. Then by Theorem \ref{several_possibilities_wrt_boxY}, one of the following will hold.
\begin{enumerate}
\item $(\vee_{i=1}^{l}\alpha_i)\wedge\neg\psi_1^{'}\wedge\ldots\wedge\neg\psi_p^{'}\models_{Y}\bot$. So $(\vee_{i=1}^{l}\alpha_i)\models_{Y}\psi_1^{'}\vee\ldots\vee\psi_p^{'}\models_{Y}\pi^{'}$. Hence by property of $\mathcal{T}$, $(\vee_{i=1}^{l}\alpha_i)\models_{\Box Y}\pi^{'}$. As $(\vee_{i=1}^{l}\alpha_i)$ is a theory implicate of $X$ with respect to $\Box Y$, so $(\vee_{i=1}^{l}\alpha_i)\equiv_{\Box Y}\pi^{'}$. As $X\wedge\Box Y\models\vee_{i=1}^{l}\alpha_i$ so $\vee_{i=1}^{l}\alpha_i$ is an implicate of  $X\wedge \Box Y$. As there doesn't exist any $C$ such that $X\wedge \Box Y\models C$ and $C\models\vee_{i=1}^{l}\alpha_i$ so $\vee_{i=1}^{l}\alpha_i$ is a prime implicate of $X\wedge\Box Y$. Assuming $\pi=\vee_{i=1}^{l}\alpha_i$ we have for every $\pi^{'}\in\Theta(X, \Box Y)$ there exists a $\pi\in\Pi(X\cup\Box Y)$ such that $\pi^{'}\equiv_{\Box Y}\pi$.  
\item $(\vee_{j=1}^{m}\beta_j)\wedge\neg\phi_1^{'}\wedge\ldots\wedge\neg\phi_q^{'}\models_{Y}\bot$. So $(\vee_{j=1}^{m}\beta_j)\models_{Y}(\phi_1^{'}\vee\ldots\vee\phi_q^{'})$. Then by Lemma \ref{symmetry_entailment}, $\Diamond(\vee_{j=1}^{m}\beta_j)\models_{\Box Y}\Diamond(\phi_1^{'}\vee\ldots\vee\phi_q^{'})$. By property of $\mathcal{K}$, we have $(\vee_{j=1}^{m}\Diamond\beta_j)\models_{\Box Y}(\Diamond\phi_1^{'}\vee\ldots\vee\Diamond\phi_q^{'})$. So $(\vee_{j=1}^{m}\Diamond\beta_j)\models_{\Box Y}\pi^{'}$. As $(\vee_{j=1}^{m}\Diamond\beta_j)$ is a theory implicate of $X$ with respect to $\Box Y$ so $(\vee_{j=1}^{m}\Diamond\beta_j)\equiv_{\Box Y}\pi^{'}$. As $X\wedge\Box Y\models\vee_{j=1}^{m}\beta_j$, so $\vee_{j=1}^{m}\beta_j$ is an implicate of $X\wedge\Box Y$.  As there doesn't exist any $C$ such that $X\wedge\Box Y\models C$ and $C\models\vee_{j=1}^{m}\beta_j$, so $\vee_{j=1}^{m}\beta_j$ is a prime implicate of $X\wedge\Box Y$. Assuming $\pi=\vee_{j=1}^{m}\beta_j$  we have for every $\pi^{'}\in\Theta(X, \Box Y)$ there exists a $\pi\in\Pi(X\cup\Box Y)$ such that $\pi^{'}\equiv_{\Box Y}\pi$.
\item $(\vee_{k=1}^{n}\gamma_{k})\wedge\neg\xi_u^{'}\wedge\neg\phi_1^{'}\wedge\ldots\wedge\phi_q^{'}\models_{Y}\bot$ for $1\leq u\leq r$. So $(\vee_{k=1}^{n}\gamma_{k})\models_{Y}\xi_u^{'}\vee\phi_1^{'}\vee\ldots\vee\phi_q^{'}$. so by Lemma \ref{symmetry_entailment}, $\Box(\vee_{k=1}^{n}\gamma_{k})\models_{\Box Y}\Box(\xi_u^{'}\vee\phi_1^{'}\vee\ldots\vee\phi_q^{'})$. By property of $\mathcal{K}$, $(\vee_{k=1}^{n}\Box\gamma_{k})\models_{\Box Y}\Box(\vee_{k=1}^{n}\gamma_{k})\models_{\Box Y}\Box\xi_u^{'}\vee\Box(\phi_1^{'}\vee\ldots\vee\phi_q^{'})\models_{\Box Y}\Box\xi_u^{'}\vee\Diamond\phi_1^{'}\vee\ldots\vee\Diamond\phi_q^{'}$. So, $(\vee_{k=1}^{n}\Box\gamma_{k})\models_{\Box Y}\Box\xi_u^{'}\vee\Diamond\phi_1^{'}\vee\ldots\vee\Diamond\phi_q^{'}$. So $(\vee_{k=1}^{n}\Box\gamma_{k})\models_{\Box Y}\pi^{'}$. As $(\vee_{k=1}^{n}\Box\gamma_{k})$ is a theory implicate of $X$ with respect to $\Box Y$, So $(\vee_{k=1}^{n}\Box\gamma_{k})\equiv_{\Box Y}\pi^{'}$. As $X\wedge\Box Y\models\vee_{k=1}^{n}\Box\gamma_{k}$, so $\vee_{k=1}^{n}\Box\gamma_{k}$ is an implicate of $X\wedge\Box Y$.  As there doesn't exist any $C$ such that $X\wedge\Box Y\models C$ and $C\models\vee_{k=1}^{n}\Box\gamma_{k}$, so $\vee_{k=1}^{n}\Box\gamma_{k}$ is a prime implicate of $X\wedge\Box Y$. Assuming $\pi=\vee_{k=1}^{n}\Box\gamma_{k}$  we have for every $\pi^{'}\in\Theta(X, \Box Y)$ there exists a $\pi\in\Pi(X\cup\Box Y)$ such that $\pi^{'}\equiv_{\Box Y}\pi$.
\item $((\vee_{i=1}^{l}\alpha_i)\vee(\vee_{j=1}^{m}\beta_j))\wedge\neg\phi_1^{'}\wedge\ldots\wedge\neg\phi_q^{'}\models_{Y}\bot$. So, $((\vee_{i=1}^{l}\alpha_i)\vee(\vee_{j=1}^{m}\beta_j))\models_{Y}\phi_1^{'}\vee\ldots\vee\phi_q^{'}$. By Lemma \ref{symmetry_entailment}, $\Diamond((\vee_{i=1}^{l}\alpha_i)\vee(\vee_{j=1}^{m}\beta_j))\models_{\Box Y}\Diamond(\phi_1^{'}\vee\ldots\vee\phi_q^{'})$. By property of $\mathcal{K}$, $((\vee_{i=1}^{l}\Diamond\alpha_i)\vee(\vee_{j=1}^{m}\Diamond\beta_j))\models_{\Box Y}\Diamond\phi_1^{'}\vee\ldots\vee\Diamond\phi_q^{'}\models_{\Box Y}\pi^{'}$. Again by property of $\mathcal{T}$, $((\vee_{i=1}^{l}\alpha_i)\vee(\vee_{j=1}^{m}\Diamond\beta_j))\models_{\Box Y}((\vee_{i=1}^{l}\Diamond\alpha_i)\vee(\vee_{j=1}^{m}\Diamond\beta_j))\models_{\Box Y}\pi^{'}$. As $((\vee_{i=1}^{l}\alpha_i)\vee(\vee_{j=1}^{m}\Diamond\beta_j))$ is a theory implicate of $X$ with respect to $\Box Y$, so $((\vee_{i=1}^{l}\alpha_i)\vee(\vee_{j=1}^{m}\Diamond\beta_j))\equiv_{\Box Y}\pi^{'}$. As $X\wedge\Box Y\models(\vee_{i=1}^{l}\alpha_i)\vee(\vee_{j=1}^{m}\Diamond\beta_j)$, so $(\vee_{i=1}^{l}\alpha_i)\vee(\vee_{j=1}^{m}\Diamond\beta_j)$ is an implicate of $X\wedge\Box Y$.  As there doesn't exist any $C$ such that $X\wedge\Box Y\models C$ and $C\models(\vee_{i=1}^{l}\alpha_i)\vee(\vee_{j=1}^{m}\Diamond\beta_j)$, so $(\vee_{i=1}^{l}\alpha_i)\vee(\vee_{j=1}^{m}\Diamond\beta_j)$ is a prime implicate of $X\wedge\Box Y$. Assuming $\pi=(\vee_{i=1}^{l}\alpha_i)\vee(\vee_{j=1}^{m}\Diamond\beta_j)$  we have for every $\pi^{'}\in\Theta(X, \Box Y)$ there exists a $\pi\in\Pi(X\cup\Box Y)$ such that $\pi^{'}\equiv_{\Box Y}\pi$.
\item $((\vee_{i=1}^{l}\alpha_i)\vee(\vee_{k=1}^{n}\gamma_{k}))\wedge\neg\xi_u^{'}\wedge\neg\phi_1^{'}\wedge\ldots\wedge\neg\phi_q^{'}\models_{Y}\bot$ for $1\leq u\leq r$. This implies, $((\vee_{i=1}^{l}\alpha_i)\vee(\vee_{k=1}^{n}\gamma_{k}))\models_{Y}\xi_u^{'}\vee\phi_1^{'}\vee\ldots\vee\phi_q^{'}$. By Lemma \ref{symmetry_entailment}, $\Box((\vee_{i=1}^{l}\alpha_i)\vee(\vee_{k=1}^{n}\gamma_{k}))\models_{\Box Y}\Box(\xi_u^{'}\vee\phi_1^{'}\vee\ldots\vee\phi_q^{'})$. By property of $\mathcal{K}$, $((\vee_{i=1}^{l}\Box\alpha_i)\vee(\vee_{k=1}^{n}\Box\gamma_{k}))\models_{\Box Y} \Box((\vee_{i=1}^{l}\alpha_i)\vee(\vee_{k=1}^{n}\gamma_{k}))$ and $\Box(\xi_u^{'}\vee\phi_1^{'}\vee\ldots\vee\phi_q^{'})\models_{\Box Y}\Box\xi_u^{'}\vee\Diamond(\phi_1^{'}\vee\ldots\vee\phi_q^{'}))\models_{\Box Y}\Box\xi_u^{'}\vee\Diamond\phi_1^{'}\vee\ldots\vee\Diamond\phi_q^{'})\models_{\Box Y}\pi^{'}$, so $((\vee_{i=1}^{l}\Box\alpha_i)\vee(\vee_{k=1}^{n}\Box\gamma_{k}))\models_{\Box Y}\pi^{'}$. As $\alpha_i$ for $1\leq i\leq l$ is a propositional formula, $((\vee_{i=1}^{l}\alpha_i)\vee(\vee_{k=1}^{n}\Box\gamma_{k}))\models_{\Box Y}((\vee_{i=1}^{l}\Box\alpha_i)\vee(\vee_{k=1}^{n}\Box\gamma_{k}))\models_{\Box Y}\pi^{'}$. As $((\vee_{i=1}^{l}\alpha_i)\vee(\vee_{k=1}^{n}\Box\gamma_{k}))$ is a theory implicate of $X$ with respect to $\Box Y$, so $((\vee_{i=1}^{l}\alpha_i)\vee(\vee_{k=1}^{n}\Box\gamma_{k}))\equiv_{\Box Y}\pi^{'}$. As $X\wedge\Box Y\models(\vee_{i=1}^{l}\alpha_i)\vee(\vee_{k=1}^{n}\Box\gamma_{k})$, so $(\vee_{i=1}^{l}\alpha_i)\vee(\vee_{k=1}^{n}\Box\gamma_{k})$ is an implicate of $X\wedge\Box Y$.  As there doesn't exist any $C$ such that $X\wedge\Box Y\models C$ and $C\models(\vee_{i=1}^{l}\alpha_i)\vee(\vee_{k=1}^{n}\Box\gamma_{k})$, so $(\vee_{i=1}^{l}\alpha_i)\vee(\vee_{k=1}^{n}\Box\gamma_{k})$ is a prime implicate of $X\wedge\Box Y$. Assuming $\pi=(\vee_{i=1}^{l}\alpha_i)\vee(\vee_{k=1}^{n}\Box\gamma_{k})$  we have for every $\pi^{'}\in\Theta(X, \Box Y)$ there exists a $\pi\in\Pi(X\cup\Box Y)$ such that $\pi^{'}\equiv_{\Box Y}\pi$.
\item $((\vee_{j=1}^{m}\beta_j)\vee(\vee_{k=1}^{n}\gamma_{k}))\wedge\neg\phi_1^{'}\wedge\ldots\wedge\neg\phi_q^{'}\models_{Y}\bot$. This implies, $((\vee_{j=1}^{m}\beta_j)\vee(\vee_{k=1}^{n}\gamma_{k}))\models_{Y}\phi_1^{'}\vee\ldots\vee\phi_q^{'}$. By Lemma \ref{symmetry_entailment}, we have $\Diamond((\vee_{j=1}^{m}\beta_j)\vee(\vee_{k=1}^{n}\gamma_{k}))\models_{\Box Y}\Diamond(\phi_1^{'}\vee\ldots\vee\phi_q^{'})$. By property of $\mathcal{K}$, $((\vee_{j=1}^{m}\Diamond\beta_j)\vee(\vee_{k=1}^{n}\Diamond\gamma_{k}))\models_{\Box Y}(\Diamond\phi_1^{'}\vee\ldots\vee\Diamond\phi_q^{'})$. Again by property of $\mathcal{T}$, $((\vee_{j=1}^{m}\Diamond\beta_j)\vee(\vee_{k=1}^{n}\Box\gamma_{k}))\models_{\Box Y}((\vee_{j=1}^{m}\Diamond\beta_j)\vee(\vee_{k=1}^{n}\Diamond\gamma_{k}))\models_{\Box Y}(\Diamond\phi_1^{'}\vee\ldots\vee\Diamond\phi_q^{'})\models_{\Box Y}\pi^{'}$. So $((\vee_{j=1}^{m}\Diamond\beta_j)\vee(\vee_{k=1}^{n}\Box\gamma_{k}))\models_{\Box Y}\pi^{'}$. As $((\vee_{j=1}^{m}\Diamond\beta_j)\vee(\vee_{k=1}^{n}\Box\gamma_{k}))$ is a theory implicate of $X$ with respect to $\Box Y$, so $((\vee_{j=1}^{m}\Diamond\beta_j)\vee(\vee_{k=1}^{n}\Box\gamma_{k}))\equiv_{\Box Y}\pi^{'}$. As $X\wedge\Box Y\models(\vee_{j=1}^{m}\Diamond\beta_j)\vee(\vee_{k=1}^{n}\Box\gamma_{k})$, so $(\vee_{j=1}^{m}\Diamond\beta_j)\vee(\vee_{k=1}^{n}\Box\gamma_{k})$ is an implicate of $X\wedge\Box Y$.  As there doesn't exist any $C$ such that $X\wedge\Box Y\models C$ and $C\models(\vee_{j=1}^{m}\Diamond\beta_j)\vee(\vee_{k=1}^{n}\Box\gamma_{k})$, so $(\vee_{j=1}^{m}\Diamond\beta_j)\vee(\vee_{k=1}^{n}\Box\gamma_{k})$ is a prime implicate of $X\wedge\Box Y$. Assuming $\pi=(\vee_{j=1}^{m}\Diamond\beta_j)\vee(\vee_{k=1}^{n}\Box\gamma_{k})$  we have for every $\pi^{'}\in\Theta(X, \Box Y)$ there exists a $\pi\in\Pi(X\cup\Box Y)$ such that $\pi^{'}\equiv_{\Box Y}\pi$.
\item $((\vee_{i=1}^{l}\alpha_i)\vee(\vee_{j=1}^{m}\beta_j)\vee(\vee_{k=1}^{n}\gamma_{k}))\wedge\neg\phi_1^{'}\wedge\ldots\wedge\neg\phi_q^{'}\models_{Y}\bot$. This implies, $((\vee_{i=1}^{l}\alpha_i)\vee(\vee_{j=1}^{m}\beta_j)\vee(\vee_{k=1}^{n}\gamma_{k}))\models_{Y}\phi_1^{'}\vee\ldots\vee\phi_q^{'}$. By Lemma \ref{symmetry_entailment}, $\Diamond((\vee_{i=1}^{l}\alpha_i)\vee(\vee_{j=1}^{m}\beta_j)\vee(\vee_{k=1}^{n}\gamma_{k}))\models_{\Box Y}\Diamond(\phi_1^{'}\vee\ldots\vee\phi_q^{'})$. By property of $\mathcal{K}$, $((\vee_{i=1}^{l}\Diamond\alpha_i)\vee(\vee_{j=1}^{m}\Diamond\beta_j)\vee(\vee_{k=1}^{n}\Diamond\gamma_{k}))\models_{\Box Y}(\Diamond\phi_1^{'}\vee\ldots\vee\Diamond\phi_q^{'})$. Again by property of $\mathcal{T}$, $((\vee_{i=1}^{l}\alpha_i)\vee(\vee_{j=1}^{m}\Diamond\beta_j)\vee(\vee_{k=1}^{n}\Box\gamma_{k}))\models_{\Box Y}((\vee_{i=1}^{l}\Diamond\alpha_i)\vee(\vee_{j=1}^{m}\Diamond\beta_j)\vee(\vee_{k=1}^{n}\Diamond\gamma_{k}))\models_{\Box Y}(\Diamond\phi_1^{'}\vee\ldots\vee\Diamond\phi_q^{'})\models_{\Box Y}\pi^{'}$. So, $((\vee_{i=1}^{l}\alpha_i)\vee(\vee_{j=1}^{m}\Diamond\beta_j)\vee(\vee_{k=1}^{n}\Box\gamma_{k}))\models_{\Box Y}\pi^{'}$. As $((\vee_{i=1}^{l}\alpha_i)\vee(\vee_{j=1}^{m}\Diamond\beta_j)\vee(\vee_{k=1}^{n}\Box\gamma_{k}))$ is a theory implicate of $X$ with respect to $\Box Y$ so, $((\vee_{i=1}^{l}\alpha_i)\vee(\vee_{j=1}^{m}\Diamond\beta_j)\vee(\vee_{k=1}^{n}\Box\gamma_{k}))\equiv_{\Box Y}\pi^{'}$. As $X\wedge\Box Y\models(\vee_{i=1}^{l}\alpha_i)\vee(\vee_{j=1}^{m}\Diamond\beta_j)\vee(\vee_{k=1}^{n}\Box\gamma_{k})$, so $(\vee_{i=1}^{l}\alpha_i)\vee(\vee_{j=1}^{m}\Diamond\beta_j)\vee(\vee_{k=1}^{n}\Box\gamma_{k})$ is an implicate of $X\wedge\Box Y$.  As there doesn't exist any $C$ such that $X\wedge\Box Y\models C$ and $C\models(\vee_{i=1}^{l}\alpha_i)\vee(\vee_{j=1}^{m}\Diamond\beta_j)\vee(\vee_{k=1}^{n}\Box\gamma_{k})$, so $(\vee_{i=1}^{l}\alpha_i)\vee(\vee_{j=1}^{m}\Diamond\beta_j)\vee(\vee_{k=1}^{n}\Box\gamma_{k})$ is a prime implicate of $X\wedge\Box Y$. Assuming $\pi=(\vee_{i=1}^{l}\alpha_i)\vee(\vee_{j=1}^{m}\Diamond\beta_j)\vee(\vee_{k=1}^{n}\Box\gamma_{k})$  we have for every $\pi^{'}\in\Theta(X, \Box Y)$ there exists a $\pi\in\Pi(X\cup\Box Y)$ such that $\pi^{'}\equiv_{\Box Y}\pi$.  Hence proved. \hfill{$\Box$}
\end{enumerate}
\end{proof}

\begin{remark}\label{tpi_pi}
Like Theorem \ref{tpi_size_less_than_pi} we can also prove that for every $\pi^{'}\in\Theta(X, \Box Y)$ there exists a $\pi\in\Pi(X\cup Y)$ such that $\pi^{'}\equiv_{\Box Y}\pi$. Consequently, $\mbox{nb\_cl}(\Theta(X,\Box Y)) \leq\mbox{nb\_cl}(\Pi(X\cup Y))$.
\end{remark}

\subsection{Algorithm for Computing Theory Prime Implicates}

Let us now present the algorithm for computation of theory prime implicates. The following algorithm is based on the Bienvenu's algorithm \cite{Bienvenu}. First, our algorithm computes the theory implicates of $X\cup\Box Y$ using the algorithm in \cite{Bienvenu} which is called as the set $CANDIDATES$. Here $Y$ is a propositional formula such that $X\models Y$. The assumption $X\models Y$ is considered in Definition \ref{omega} below. Then it computes theory implicates of $X\cup\Box Y$ and then it removes logically entailed clauses with respect to $\models_{\Box Y}$ to get theory prime implicates of $X\cup\Box Y$.

\vspace{.2cm}

\noindent{\bf Algorithm} \hspace{.2cm}  $MODALTPI(X,\Box Y)$ 

\vspace{.2cm}

\noindent  Input: Two formulas $X$ and $Y$ where $X$ is a modal formula and Y is any\\ 
$~~~~~~~~~~$ propositional formula\\
 Output: Set of theory prime implicates of $X$ with respect to $\Box Y$\\
{\tt begin\\
$~~~$ Compute CANDIDATES for $X\cup\Box Y$\\
$~~~$ Remove $\pi_j$ from CANDIDATES if $\pi_i\models_{\Box Y}\pi_j$ for some $\pi_i$ in CANDIDATES\\
$~~~$ Return CANDIDATES(=$\Theta(X,\Box Y))$\\
end}

\begin{theorem}\label{tpi_termination}
The algorithm MODALTPI terminates.
\end{theorem}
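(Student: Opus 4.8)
The plan is to argue termination phase by phase, matching the three instructions in the body of MODALTPI. The first instruction calls Bienvenu's prime-implicate procedure of \cite{Bienvenu} on the input $X\cup\Box Y$. I would first note that $X\cup\Box Y$ is a well-formed input for that procedure: $X$ is a modal formula in the sense of Definition \ref{LCTF} and $\Box Y$ is a conjunction of boxed clauses (hence again a formula of $F$), so their conjunction is admissible. Then I would simply invoke the termination guarantee proved in \cite{Bienvenu}: that procedure halts and outputs the \emph{finite} set $\Pi(X\cup\Box Y)$. Thus after the first instruction, CANDIDATES is a finite set of clauses.

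For the second instruction --- the loop that deletes $\pi_j$ whenever $\pi_i\models_{\Box Y}\pi_j$ for some $\pi_i$ still present --- I would observe two things. First, CANDIDATES is finite, so there are only finitely many ordered pairs $(\pi_i,\pi_j)$ to inspect, and each inspection either leaves CANDIDATES unchanged or strictly decreases its cardinality; since the cardinality can drop only finitely often, the loop executes finitely many iterations regardless of the order in which pairs are considered. Second, each iteration performs one test of the form $\pi_i\models_{\Box Y}\pi_j$, and I must argue that this test is effective. By definition this is the question whether $\pi_i\wedge\Box Y\models\pi_j$ in $\mathcal{T}$, equivalently whether $\pi_i\wedge\Box Y\wedge\neg\pi_j\models_{\Box Y}\bot$; since every clause returned by Bienvenu's algorithm has the form $\psi_1\vee\cdots\vee\psi_p\vee\Diamond\phi_1\vee\cdots\vee\Diamond\phi_q\vee\Box\xi_1\vee\cdots\vee\Box\xi_r$, I would apply the decomposition of Theorem \ref{several_possibilities_wrt_boxY} (together with Lemma \ref{symmetry_entailment}) to reduce the test to finitely many unsatisfiability checks at strictly smaller modal depth, and conclude by induction on modal depth that each test terminates (bottoming out at propositional satisfiability, which is decidable). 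Hence the loop halts. The third instruction returns CANDIDATES in one step, so the whole algorithm terminates.

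The main obstacle, and the only step that is not essentially bookkeeping, is justifying that the oracle call $\pi_i\models_{\Box Y}\pi_j$ inside the loop is computable; everything else is finiteness of the candidate set inherited from \cite{Bienvenu} plus a strictly-decreasing-measure argument. I expect the cleanest way to close that gap is the modal-depth induction sketched above, using Theorem \ref{several_possibilities_wrt_boxY} to peel one $\Box/\Diamond$ layer at a time; alternatively one could just cite decidability of $\mathcal{T}$ together with the equivalence $\Gamma\models_{\Box Y}\delta$ iff $\Gamma\wedge\Box Y\models\delta$, but the explicit reduction keeps the argument within the syntactic framework already developed in the paper.
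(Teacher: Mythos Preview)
Your argument is correct and follows essentially the same route as the paper: invoke the termination and finiteness of Bienvenu's procedure to get a finite CANDIDATES set, then observe that only finitely many pairs need to be compared in the pruning loop. The paper's own proof is in fact shorter and less careful than yours---it does not address the effectiveness of the test $\pi_i\models_{\Box Y}\pi_j$ at all, simply taking it for granted---so your added justification (via Theorem \ref{several_possibilities_wrt_boxY} or, more simply, decidability of $\mathcal{T}$) is a genuine improvement rather than a deviation.
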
 
\begin{proof}
To compute theory prime implicates of $X$ with respect to $\Box Y$, we are infact computing prime implicates of $X\cup\Box Y$. So by \cite{Bienvenu} the set CANDIDATES containing the set of implicates of  $X\cup\Box Y$ is finite. Then in CANDIDATES we compare a pair of implicates at most once for each pair and there are only finite such pairs, so the algorithm $MODALTPI$ must terminate. \hfill{$\Box$}
\end{proof}

The correctness of the above algorithm $MODALTPI$ follows from Theorem \ref{minimal_wrt_pi} and Theorem \ref{tpi_termination}.

\subsection{Theory Prime Implicate Compilation}

\begin{definition}\label{omega}
Let $X$ be a modal formula and $Y$ be any propositional formula such that $X\models Y$ and $\Box Y$ be tractable. The theory prime implicate compilation of $X$ with respect to $\Box Y$ is defined as $\Omega_{\Box Y}(X)=\Theta(X,\Box Y)\cup\Box Y$.
\end{definition}

\begin{theorem}\label{tpi_compilation}
$\Omega_{\Box Y}(X)\equiv X$.
\end{theorem}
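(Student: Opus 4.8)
The plan is to derive this equivalence almost entirely from Lemma~\ref{equivalence_wrt_boxy}, which already supplies $X\equiv_{\Box Y}\Theta(X,\Box Y)$, together with one preliminary observation: the hypothesis $X\models Y$ (built into Definition~\ref{omega}) can be strengthened to $X\models\Box Y$ in the global-consequence reading fixed in Section~2.

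First I would prove $X\models\Box Y$. Since $\models$ here is \emph{global} consequence, $X\models Y$ says that in any Kripke model $\mathcal{M}$ satisfying $X$ at every world, $Y$ holds at every world of $\mathcal{M}$; in particular $Y$ holds at every $R$-successor of every world, which is precisely the clause defining $\mathcal{M},w\models\Box Y$ for every $w$. Hence $X\models\Box Y$, and consequently $X\equiv X\cup\Box Y$ (the direction $X\cup\Box Y\models X$ is trivial, and $X\models X\cup\Box Y$ is exactly the fact just established).

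Next I would unfold Lemma~\ref{equivalence_wrt_boxy}: $X\equiv_{\Box Y}\Theta(X,\Box Y)$ means $X\cup\Box Y\models\Theta(X,\Box Y)$ and $\Theta(X,\Box Y)\cup\Box Y\models X$. For the direction $\Omega_{\Box Y}(X)\models X$, observe that $\Omega_{\Box Y}(X)=\Theta(X,\Box Y)\cup\Box Y$, so this is literally the second of these two entailments. For the converse $X\models\Omega_{\Box Y}(X)$, I would combine $X\equiv X\cup\Box Y$ from the previous step with $X\cup\Box Y\models\Theta(X,\Box Y)$ to obtain $X\models\Theta(X,\Box Y)$; together with $X\models\Box Y$ this yields $X\models\Theta(X,\Box Y)\cup\Box Y=\Omega_{\Box Y}(X)$, completing the proof.

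There is essentially no real obstacle here; the only point that genuinely requires care is the step $X\models Y\Rightarrow X\models\Box Y$, which relies on the global rather than local notion of consequence and is exactly why the side condition $X\models Y$ is imposed. Everything else is routine monotonicity of $\models$ and bookkeeping between $\models_{\Box Y}$ and $\models$.
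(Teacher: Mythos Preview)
Your proposal is correct and follows essentially the same route as the paper: both directions come from Lemma~\ref{equivalence_wrt_boxy}, with the only extra ingredient being $X\models\Box Y$. The paper obtains this via the intermediate step $Y\models\Box Y$ (necessitation under global consequence) and transitivity, whereas you argue it directly from the global-consequence semantics; these are the same idea packaged slightly differently.
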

\begin{proof}
We have to prove $\Theta(X,\Box Y)\cup\Box Y\equiv X$, i.e, to prove $\Theta(X,\Box Y)\cup\Box Y\models X$ and $X\models \Theta(X,\Box Y)\cup\Box Y$. First part is direct from Lemma \ref{equivalence_wrt_boxy}. For second part, as $X\models Y$ and $Y$ is a propositional formula, $Y\models\Box Y$, so $X\models\Box Y$, this implies, $X\models X\cup\Box Y$. So we can write it as $X\models X\cup\Box Y\cup\Box Y$. By Lemma \ref{equivalence_wrt_boxy}, $X\models\Theta(X,\Box Y)\cup\Box Y$. Hence proved. \hfill{$\Box$}
\end{proof}

The above result shows that $\Omega_{\Box Y}(X)$ is an equivalence preserving knowledge compilation and if you pose any query $Q$ to a knowledge base $X$ then it finds a propositional clause $Y$ contained in $X$ such that $X\models Y$ and compute the theory prime implicate of $X\cup\Box Y$ using the algorithm $MODALTPI$ and then the query $Q$ is answered from $\Omega_{\Box Y}(X)$ using the following algorithm $QA$ in polynomial time.

\vspace{.2cm}

\noindent{\bf Algorithm} \hspace{.2cm}  $QA(\Omega_{\Box Y}(X), Q)$

\vspace{.2cm}

\noindent  Input: The theory prime implicate compilation $\Omega_{\Box Y}(X)$ and a clausal query $Q$\\
 Output: true if $X\models Q$ holds\\
{\tt begin\\
$~~~$ if $\pi^{'}\models_{\Box Y} Q$ for every $\pi^{'}\in\Theta(X,\Box Y)\cup\Box Y$ \\
$~~~~~~~$ then return true\\
$~~~$ else \\
$~~~~~~~$ return false\\
end}

The correctness of the above algorithm follows from Theorem \ref{alt_def_of_tpi} and Theorem \ref{tpi_termination}. Let us now see how query answering can be performed in polynomial time.

\begin{theorem}\label{complexity}
Let $X$ be a modal formula and $Y$ be any propositional formula such that $X\models Y$ and $\Box Y$ is tractable. So checking whether $\pi^{'}\models_{\Box Y}Q$ holds in algorithm $QA$ can be done in time $O(|\Box Y\cup Q|^{m})$ and answering a query in algorithm $QA$ can be performed in time $O(|\Theta(X,\Box Y)\cup\Box Y|*|\Box Y\cup Q|^{m}))$.
\end{theorem}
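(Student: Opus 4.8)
The plan is a routine ``cost of one iteration $\times$ number of iterations'' estimate, so the argument splits into two parts: bounding the cost of a single test $\pi^{'}\models_{\Box Y}Q$ performed inside $QA$, and bounding the number of clauses $\pi^{'}$ over which the loop of $QA$ ranges.

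For a single test, I would first reformulate it exactly as was done in the proof of Theorem~\ref{weaken_the_consequence}: $\pi^{'}\models_{\Box Y}Q$ means $\pi^{'}\cup\Box Y\models Q$, which is equivalent to $\pi^{'}\wedge\neg Q\models_{\Box Y}\bot$; and after driving negations inward $\neg Q$ is a conjunction of literals, so $\pi^{'}\wedge\neg Q$ is a conjunction of the single clause $\pi^{'}$ with finitely many literals. I then invoke the hypothesis that $\Box Y$ is tractable, which I read as: deciding a $\Box Y$-consequence of a clause --- equivalently, $\Box Y$-unsatisfiability of such a term --- can be done in time polynomial in the input size, with some fixed exponent $m$ depending only on the tractable class to which $\Box Y$ belongs. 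This immediately gives a bound of $O(|\Box Y\cup Q|^{m})$ for the test. If one wants this spelled out rather than black-boxed, a decomposition in the style of Theorem~\ref{several_possibilities_wrt_boxY} reduces $\pi^{'}\wedge\neg Q\models_{\Box Y}\bot$ to a number of propositional tests $(\cdots)\models_{Y}\bot$, bounded by $O(|\pi^{'}|+|Q|)$, built from the propositional, $\Box$- and $\Diamond$-parts of $\pi^{'}$, $Q$ and $Y$; tractability of $\Box Y$ is precisely what makes each of these propositional tests polynomial.

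For the full algorithm, $QA$ runs the above test once for each clause $\pi^{'}$ in $\Theta(X,\Box Y)\cup\Box Y$; the loop terminates because $\Theta(X,\Box Y)$ is finite by Theorem~\ref{tpi_termination}. Multiplying the per-test bound by the number $|\Theta(X,\Box Y)\cup\Box Y|$ of tests gives total running time $O(|\Theta(X,\Box Y)\cup\Box Y|\cdot|\Box Y\cup Q|^{m})$, as claimed.

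The main obstacle is pinning down what ``tractable'' buys us and checking that it is preserved under the only thing the test does to $\Box Y$, namely conjoining the single clause $\pi^{'}$ together with the literals of $\neg Q$: one must argue either that the ambient tractable class is closed under adding one clause and a bounded set of literals, or --- via the decomposition above --- that the resulting propositional tests stay inside a class for which $\models_{Y}$ is polynomial. The remaining points are pure bookkeeping: that the exponent $m$ is a constant independent of $X$, $Y$ and $Q$, and that the $O(|\pi^{'}|+|Q|)$ overhead from splitting the clause $\pi^{'}$ is absorbed into $|\Box Y\cup Q|^{m}$ (or simply charged to the input size). Nothing else is non-trivial.
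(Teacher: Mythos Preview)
Your two-part structure (cost of one test $\times$ number of clauses) matches the paper, but the decomposition you use for the single test is dual to the paper's, and in particular your reading of $m$ is not the paper's.

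The paper does not negate $Q$; it negates $\pi'$. Writing $\pi' = l_1\vee\cdots\vee l_m$, the paper reformulates $\pi'\models_{\Box Y}Q$ as $\Box Y\models \neg\pi'\vee Q$, then distributes $\vee$ over the conjunction $\neg\pi' = \neg l_1\wedge\cdots\wedge\neg l_m$ to obtain $\Box Y\models\bigwedge_{i=1}^{m}(\neg l_i\vee Q)$, and checks the $m$ single-clause entailments $\Box Y\models \neg l_i\vee Q$ one by one. Thus $m$ is the number of literals of $\pi'$, not a tractability exponent; tractability of $\Box Y$ is invoked only to say that each individual test $\Box Y\models \neg l_i\vee Q$ costs $|\Box Y\cup Q|$, and the stated $O(|\Box Y\cup Q|^{m})$ is then obtained from the $m$ repetitions. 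Theorem~\ref{several_possibilities_wrt_boxY} is not used at all --- the only algebra is plain distributivity.

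Your route (negating $Q$, then either black-boxing a polynomial oracle or appealing to Theorem~\ref{several_possibilities_wrt_boxY}) can be made to work, but it is heavier and does not reproduce the paper's meaning of $m$ or the exact shape of the bound. The paper's decomposition has the virtue that every subproblem is literally of the form ``tractable theory $\models$ single clause'', which is exactly the shape the tractability hypothesis speaks to; your decomposition instead produces an instance ``clause $\wedge$ term $\wedge$ theory $\models\bot$'' and then needs the closure argument you yourself flag as the main obstacle.
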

\begin{proof}
 In order to check whether $\pi^{'}\models_{\Box Y}Q$ holds for each $\pi^{'}\in\Theta(X,\Box Y)\cup\Box Y$ in algorithm $QA$, we have to check whether $\Box Y\models\pi^{'}\rightarrow Q$ holds, i.e, to see whether $\Box Y\models\neg\pi^{'}\vee Q$ holds, i.e, to see whether $\Box Y\models\neg(l_1\vee l_2\vee\ldots\vee l_m)\vee Q$ holds where $l_{i}$'s are literals in $\pi^{'}$, i.e, to see whether $\Box Y\models(\neg l_1\wedge\neg l_2\wedge\ldots\wedge\neg l_m)\vee Q$ holds, i.e, to see whether $\Box Y\models(\neg l_1\vee Q)\wedge(\neg l_2\vee Q)\wedge\ldots\wedge(\neg l_m\vee Q)$ holds, i.e, to see whether $\Box Y\models(\neg l_i\vee Q)$ holds for each $i$ such that $1\leq i\leq m$. This test can be performed in time $|\Box Y\cup Q|$ for each $i$. So for all $i$ such that $1\leq i\leq m$ this test can be performed in time $O(|\Box Y\cup Q|^{m})$ where $\Box Y$ is tractable but this time complexity is for a single clause $\pi^{'}\in\Theta(X,\Box Y)\cup\Box Y$. So the query answering $QA$ for all the clauses can be performed in time $O(|\Theta(X,\Box Y)\cup\Box Y|*|\Box Y\cup Q|^{m}))$. \hfill{$\Box$}
\end{proof} 

\subsection{Tractable Theories}
The theory prime implicate compilation $\Theta(X,\Box Y)$ is easily exponential with respect to $|X\cup\Box Y|$ by \cite{Kean1} but if we have exponential number of queries then obviously each query can be answered in polynomial time in their size. As in query answering algorithm $QA$, we check whether  $\pi^{'}\models_{\Box Y} Q$ for every $\pi^{'}\in\Theta(X,\Box Y)\cup\Box Y$ and moreover query answering with respect to $\Theta(X,\Box Y)$ is polynomial so we assume $\Box Y$ to be tractable to keep the query answering in polynomial time. By \cite{Cerro}, the satisfiability of modal Horn clauses of $S5$ can be checked in polynomial time, so we assume $\Box Y$ to be a Horn clause in $S5$.

\begin{example}
 Consider a formula $X=(p_1\vee p_2)\wedge\Diamond\Box\neg p_3\wedge\Box\Diamond p_2$. We take $Y=X\setminus(\Diamond\Box\neg p_3\wedge\Box\Diamond p_2)=p_1\vee p_2$, so $\Box Y=\Box(p_1\vee p_2)$. Clearly $X\models Y$ and $\Box Y$ is tractable in system $S5$. So $X\wedge\Box Y=(p_1\vee p_2)\wedge\Diamond\Box\neg p_3\wedge\Box\Diamond p_2\wedge\Box(p_1\vee p_2)$. So we have $CANDIDATES=\{p_1\vee p_2, p_1\vee\Box(\Diamond p_2\wedge(p_1\vee p_2)), p_1\vee\Diamond(\Box\neg p_3\wedge\Diamond p_2\wedge(p_1\vee p_2)), \Box(\Diamond p_2\wedge(p_1\vee p_2))\vee p_2, \Box(\Diamond p_2\wedge(p_1\vee p_2)), \Box(\Diamond p_2\wedge(p_1\vee p_2))\vee\Diamond(\Box\neg p_3\wedge\Diamond p_2\wedge(p_1\vee p_2)), \Diamond(\Box\neg p_3\wedge\Diamond p_2\wedge(p_1\vee p_2))\vee p_2, \Diamond(\Box\neg p_3\wedge\Diamond p_2\wedge(p_1\vee p_2))\vee\Box(\Diamond p_2\wedge(p_1\vee p_2)), \Diamond(\Box\neg p_3\wedge\Diamond p_2\wedge(p_1\vee p_2))\}$. After removing logically entailed clauses with respect to $\Box Y$, the set of theory prime implicates of $X$ with respect to $\Box Y$, i.e, $\Theta(X,\Box Y)=\{p_1\vee p_2, \Box(\Diamond p_2\wedge(p_1\vee p_2)), \Diamond(\Box\neg p_3\wedge\Diamond p_2\wedge(p_1\vee p_2))\}$.
\end{example}

\section{Conclusion}
In this paper the definitions and results of theory prime implicates in propositional logic \cite{Marquis} is extended to modal logic $\mathcal{T}$ and the algorithm for computing theory prime implicates in propositional logic is also extended to modal logic according to \cite{Bienvenu} and its correctness has been proved. Another algorithm for query answering in \cite{Marquis} from $\Omega_{\Box Y}(X)$ is also extended to modal logic. Due to Lemma \ref{symmetry_entailment}, Theorem \ref{weaken_the_consequence}, and Theorem \ref{tpi_size_less_than_pi} we had to compute theory prime implicates of $X$ with respect to $\Box Y$ in the algorithm $MODALTPI$ instead of theory prime implicates of $X$ with respect to $Y$ as given in \cite{Marquis}. So, if $Y$ is empty, then $\Theta(X,\Box Y)=\Pi(X)$. As a future work, we want to compute theory prime implicates of a knowledge base $X$ with respect to another arbitrary modal knowledge base Z instead of the knowledge base $\Box Y$ for a proposition knowledge base $Y$ assumed here.  By Theorem \ref{tpi_compilation} we have shown that the theory prime implicate compilation $\Omega_{\Box Y}(X)$ is equivalent to $X$ so queries will be answered from $\Omega_{\Box Y}(X)$ in polynomial time by Theorem \ref{complexity}. Our algorithm $MODALTPI$ is based on Bienvenu's algorithm \cite{Bienvenu} which relies on distribution property whereas Marquis's theory prime implicate algorithm \cite{Marquis} is based on prime implicate generation algorithm of Kean and Tsiknis \cite{Kean1} and of de Kleer \cite{Kleer2} which rely on resolution.


%
%
%
%
%
\end{document}